\def\mc {\mathcal}
\def\mc {\mathcal}
\def\ZZ {{\mathbb Z}}
\def\CC {{\mathbb C}}
\def\RR {{\mathbb R}}
\def\bone {\mathbbm{1}}
\newtheorem{theorem}{Theorem}[section]
\newtheorem{lemma}[theorem]{Lemma}
\newtheorem{fact}[theorem]{Fact}
\newtheorem{definition}{Definition}[section]
\newtheorem{example}{Example}[section]
\begin{document}

\title{
Non-abelian anyons on graphs from presentations of graph braid groups
}
\author{Tomasz Maci\k{a}\.{z}ek$^{1,2}$}
\affil{$^{1}$ Center for Theoretical Physics, Polish Academy of Sciences, Al. Lotnik\'ow 32/46, 02-668 Warszawa, Poland \\ $^2$ School of Mathematics, University of Bristol, Bristol BS8 1TW, UK}
\date{}
\twocolumn[
\begin{@twocolumnfalse}
\maketitle
    \begin{abstract}
The aim of this paper is to analyse algorithms for constructing presentations of graph braid groups from the point of view of anyonic quantum statistics on graphs. In the first part of this paper, we provide a comprehensive review of an algorithm for constructing so-called minimal Morse presentations of graph braid groups that relies on discrete Morse theory. Next, we introduce the notion of a physical presentation of a graph braid group as a presentation whose generators have a direct interpretation as particle exchanges. We show how to derive a physical presentation of a graph braid group from its minimal Morse presentation. In the second part of the paper, we study unitary representations of graph braid groups that are constructed from their presentations. We point out that algebraic objects called moduli spaces of flat bundles encode all unitary representations of graph braid groups. For $2$-connected graphs, we conclude the stabilisation of moduli spaces of flat bundles over graph configuration spaces for large numbers of particles. Moreover, we set out a framework for studying locally abelian anyons on graphs whose non-abelian properties are only encoded in non-abelian topological phases assigned to cycles of the considered graph.
    \end{abstract}
  \end{@twocolumnfalse}
  ]

\section{Introduction}
Anyonic quantum statistics is a notion that refers to situations when an interchange of (quasi)partiles in a physical model results with some general unitary transformation of a possibly multicomponent many-body wave function. Quasi-particles that obey anyonic statistics are called anyons. They are generalisations of bosons and fermions in the following sense. If a pair of bosons is exchanged, the many-particle wave function remains unchanged, i.e. is multiplied by the trivial phase factor $e^{i 0}$. On the other hand, an exchange of two fermions results with the multiplication of the wave function by factor $-1=e^{i \pi}$. For single-component wave functions, an exchange of a pair of anyons results with the multiplication by factor $e^{i\theta}$, $\theta\in ]0,\pi[$. Such scalar anyons are known to appear, for instance, in certain ansatzes for multi-electron wave functions realising the Fractional Quantum Hall effect (FQHE) \cite{Laughlin,Wilczek}. More specifically, they approximately describe excited states of FQHE hamiltonians. FQHE hamiltonians also provide models for anyons described by multi-component wave functions, called non-abelian anyons, see e.g. \cite{Haldane}. While there exist physical models realising anyons on graphs, this field of study is still quite unexplored. The already existing models have found use in quantum computing \cite{alicea} and in solid state physics \cite{HKR,Bolte13a}. 

The a priori existence of different types of anyons is strongly restricted by the topology of the space where the anyons are constrained to move. For instance, scalar anyons do not exist in the three-dimensional Euclidean space, $\RR^3$ \cite{LM}. The same holds true when anyons are constrained to move on a closed orientable two-manifold \cite{Sudarshan}. For $\RR^2$ the existence of scalar anyons is allowed and there are no restrictions for the exchange phase $\theta$. If anyons are constrained to move on a sphere, then the allowed exchange phases are $\theta=n\pi/N$, where $N$ is the number of anyons and $0\leq n\leq 2N-3$ \cite{ThoulessWu85}. It is not clear how to realise anyon exchange on the line, $\RR$, as it is not possible there to exchange particles without a collision. However, on graphs, i.e. on networks built out of one-dimensional line segments, the existence of many junctions allows for a well-defined particle exchange without collisions. This fact has been explored in recent papers \cite{HKRS,ASphd,MS16,MS18} to set out a framework for studying abelian and non-abelian anyons on graphs. In particular, it has been shown that different types of quantum statistics are possible on graphs, depending on the topology of a given graph. For scalar anyons, only bosons and fermions are possible on $3$-connected graphs, whereas on $2$-connected and $1$-connected graphs a great variety of abelian anyons is possible \cite{HKRS}. Much less is known about non-abelian anyons on graphs. By computing certain topological invariants of graph configurations spaces called homology groups \cite{MS16,MS18}, using arguments based on K-theory, it has been shown that for wave functions with a sufficiently large number of components, for many families of graphs there is just one class of non-abelian quantum statistics.

In this paper, we focus on modelling non-abelian anyons on graphs via unitary representations of graph braid groups. Let us next briefly revisit main steps of this construction. For $N$ particles constrained to move in a topological space $X$, we consider wave functions as functions from the $n$-particle configuration space, $C_N(X)$, to complex numbers, $\CC$. The considered wave functions can have more than one component. If this is the case, the $k$-component wave function is described by a vector $\overline\Psi=(\Psi_1(q),\hdots,\Psi_k(q))$, where $q$ describes a configuration of $N$ particles in $X$. Configuration space $C_N(X)$ encodes some basic properties of the studied particles. In particular, we consider only hard-core particles, i.e. from the traditional $N$-fold cartesian product, $X^N$, we exclude collision points given by $\Delta=\{(q_1,\hdots,q_n):\ \ \exists_{i\neq j}\ q_i=q_j\}$. Furthermore, we impose the indistinguishability of particles by identifying configurations that differ by a permutation of particles. This can be written concisely as the quotient $C_N(X)=(X^N-\Delta)/S_N$. It is a well-known fact that such configuration spaces lead to a correct description of anyonic quantum statistics \cite{LM,Souriau,Wilczek}. Another crucial ingredient is the notion of a parallel transport of wave functions around loops in $C_N(X)$. If $X$ is a manifold, one defines a quantum theory by considering a vector bundle over $C_N(X)$. Wave functions are interpreted as sections of such a vector bundle and gauge potentials are incorporated as connections on the considered vector bundle. Recall that in such a setting, flat connections correspond to the vanishing of classical forces in the considered quantum system. This happens, for instance, when a screened magnetic field is present in the system so that it vanishes in the region where the particles are allowed to move. However, a magnetic potential can still be present and can affect the behaviour of the quantum system. Such a flat connection leads to the parallel transport, $\hat T$, that for a given loop $\gamma\subset C_N(X)$  i) transforms wave functions via unitary operators $\hat T_\gamma\overline\Psi=U_\gamma\overline\Psi$, $U_\gamma\in U(k)$, ii) operators depend only on the homotopy class of loops, i.e. $U_\gamma=U_{\gamma'}$ if $\gamma$ is homotopy equivalent to $\gamma'$. This gives rise to a unitary representation of the fundamental group of $C_N(X)$ which is called the $n$-strand braid group of $X$ and denoted by $Br_N(X)$. Therefore, in general, different quantisations of a classical system described by configuration space $C_N(X)$ are in a one-to-one correspondence with isomorphism classes of irreducible unitary representations of $Br_N(X)$. A related mathematical object is called the moduli space of flat bundles given by the quotient 
\begin{equation}
\mathcal{M}_N(X,U(k)):=\frac{{\mathrm{Hom}}(Br_N(X),U(k))}{U(k)}
\end{equation}
In other words, all non-abelian quantum statistics for particles constrained to move in topological space $X$ are given by points of $\mathcal{M}_N(X,U(k))$, while scalar quantum statistics correspond to $\mathcal{M}_N(X,U(1))$, i.e. abelian representations of $Br_N(X)$.

In the main body of this paper we review chosen algorithms for constructing presentations of graph braid groups \cite{FSbraid,KoPark}, i.e. groups $Br_N(X)$ where $X=\Gamma$, a graph. The aim of the first part of the paper is to provide a comprehensive overview of an algorithm for constructing so-called minimal presentations of graph braid groups \cite{KoPark}. In the second part we analyse the algorithm from the point of view of anyonic quantum statistics, i.e. unitary representations of graph braid groups constructed from their minimal presentations. In particular, i) we provide arguments for the stabilisation of $\mathcal{M}_N(\Gamma,U(k))$, i.e. if $\Gamma$ is $2$-connected, there exists $N_0$ such that for all $N> N_0$ we have $\mathcal{M}_N(\Gamma,U(k))\cong \mathcal{M}_{N_0}(\Gamma,U(k))$, ii) in analogy to anyons on a torus \cite{Einarsson}, we define so-called locally abelian anyons on graphs which are anyons that locally braid as abelian anyons, but globally behave in a non-abelian way. Throughout the paper, we analyse examples of graphs and their corresponding braid groups and derive their minimal presentations in terms of loops in $C_N(\Gamma)$.

\section{Presentations of graph braid groups - a review}
Graph configuration spaces are aspherical, i.e. their fundamental group is their only non-vanishing homotopy group. Equivalently, the universal covering space of $C_N(\Gamma)$ is contractible. Therefore, graph braid groups encode all topological information about graph configuration spaces (see e.g. \cite{Luck}). However, finding the form of $Br_N(\Gamma)$ for a given graph is known to be a difficult task. All graph braid groups (we restrict our attention only to finite graphs) are finitely presented. This means that there exists a finite set of generators $\alpha_1,\hdots,\alpha_r$ and a finite set of relators $R_1(\alpha_1,\hdots,\alpha_r),\hdots,R_s(\alpha_1,\hdots,\alpha_r)$ in the form of finite words in $\alpha_1,\hdots,\alpha_r$ and their inverses such that 
\begin{gather}\label{presentation}
\nonumber Br_N(\Gamma)=\langle \alpha_1,\hdots,\alpha_r|\ R_1(\alpha_1,\hdots,\alpha_r)=1, \hdots, \\R_s(\alpha_1,\hdots,\alpha_r)=1\rangle.
\end{gather}
Equation (\ref{presentation}) is called a presentation of group $Br_N(\Gamma)$. There exists a certain intuitive choice of generators for $Br_N(\Gamma)$ in terms of particles moving on junctions and loops in $\Gamma$ \cite{HKRS,Knudsen}. However, this intuitive choice of generators leads to many redundancies and can be greatly simplified. Moreover, except for the two-particle case \cite{kurlin}, it is not clear how to complete such a description and write down the set of relators. Therefore, we will shortly proceed with a different method that relies on discrete Morse theory \cite{FSbraid, KoPark} and leads to a minimal presentation of $Br_N(\Gamma)$ as the fundamental group of a much smaller space called the Morse complex of $C_N(\Gamma)$ and denoted here by $\tilde D_N(\Gamma,T)$ where $T$ is a spanning tree of $\Gamma$. One of the drawbacks of the Morse-complex method is that additional work has to be done in order to interpret generators as loops back in $C_N(\Gamma)$. Nevertheless, we show how one can accomplish such an interpretation and we realise it in examples.

Let us start with the aforementioned intuitive set of generators. The construction of such a set relies on an analysis of two small canonical graphs. The first canonical graph is a $Y$-graph which describes an exchange of a pair of particles on a junction in $\Gamma$. The exchange is called a $Y$-exchange and goes as shown in Fig. \ref{y-gen}, left to right.
\begin{figure}
\centering
\includegraphics[width=.45\textwidth]{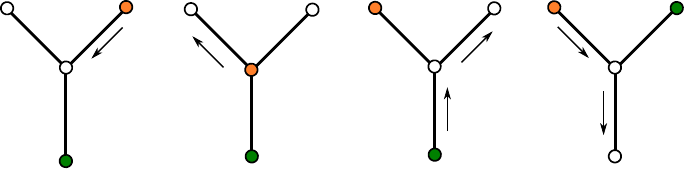}
\caption{A generator of $Br_N(\Gamma)$ as an exchange of a pair of particles in a $Y$-junction. The positions of the remaining $N-2$ particles are fixed.}
\label{y-gen} 
\end{figure}
The second graph is a lasso graph (also called a lollipop graph) that consists of a circle with a lead attached. The corresponding generator is called an $\mc{O}$-generator and is shown in Fig. \ref{ab-gen}.
\begin{figure}
\centering
\includegraphics[width=0.15\textwidth]{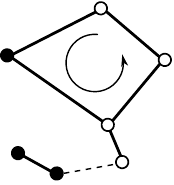}
\caption{A generator of $Br_N(\Gamma)$ where one particle travels around a cycle in $\Gamma$. The positions of the remaining $N-1$ particles are fixed.}
\label{ab-gen} 
\end{figure}
\begin{example}[Two-strand braid group of a $\Theta$-graph]
Group $Br_2(\Gamma_\Theta)$ is a free group on three generators \cite{kurlin}, $Br_2(\Gamma_\Theta)=\langle \alpha_D,\alpha_U,\gamma_L\rangle$. Generators $\alpha_U$ and $\alpha_D$ are of the $\mc{O}$-type while generator $\gamma_L$ denotes a $Y$-exchange on the left junction. Clearly, it is possible to have an analogous exchange on the right junction, $\gamma_R$. Such an exchange depends on the above generators as (see Fig. \ref{theta-proof} for a pictorial proof)
\begin{equation}\label{theta-rel}
\gamma_R\sim\alpha_D\alpha_U\gamma_L^{-1}\alpha_D^{-1}\alpha_U^{-1}.
\end{equation}
 \begin{figure}[H]
\centering
\includegraphics[width=0.45\textwidth]{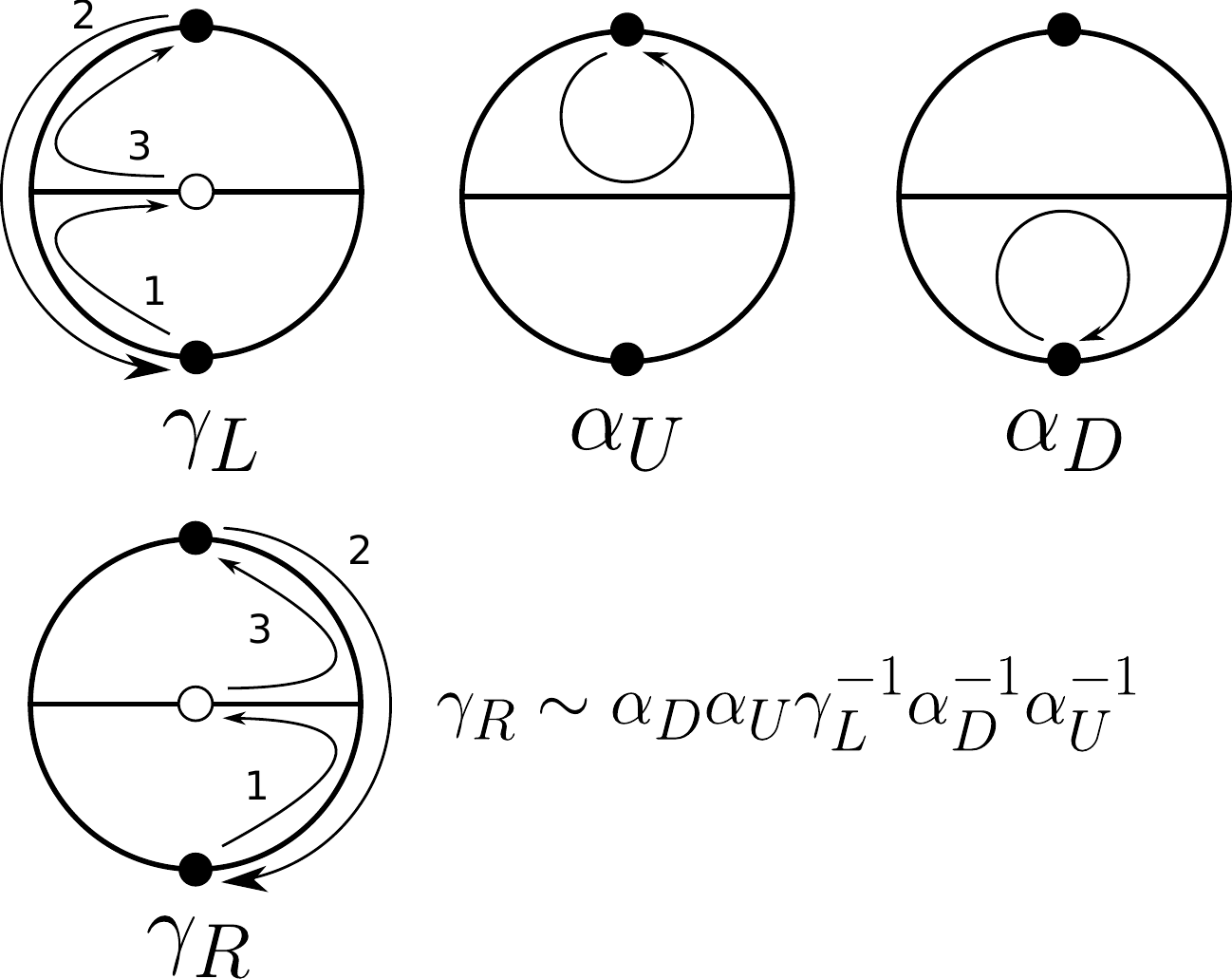}
\caption{Group $Br_2(\Gamma_\Theta)$ is a free group with three generators: $\alpha_U,\alpha_D,\gamma_L$. An exchange on the right junction can be expressed as the above word in the three generators.}
\label{theta-braid}
\end{figure}

 \begin{figure}[H]
\centering
\includegraphics[width=0.45\textwidth]{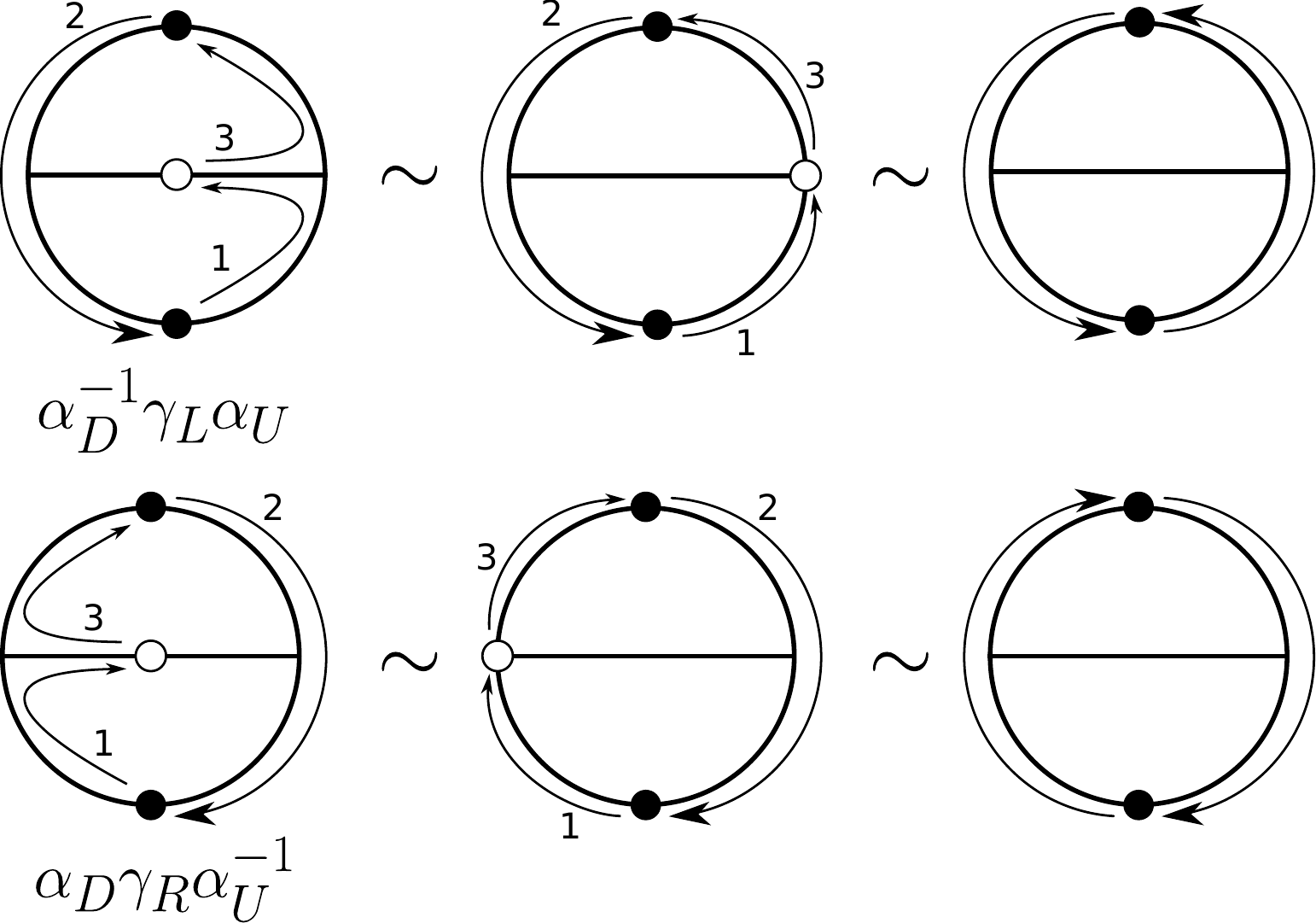}
\caption{A pictorial proof showing that $\alpha_D^{-1}\gamma_L\alpha_U\sim\left(\alpha_D\gamma_R\alpha_U^{-1}\right)^{-1}$, a relation which is equivalent with relation (\ref{theta-rel}) for generators from Fig. \ref{theta-braid}. }
\label{theta-proof}
\end{figure}
\end{example}

\subsection{General properties}
Before we proceed with the discrete Morse theory for graphs, we summarise some general properties of graph braid groups that will play important roles in further sections. Firstly, recall the definition of the commutator subgroup. For any group $G$, its commutator subgroup, denoted here by $G'$, is the group generated by group commutators of elements of $G$
\begin{equation*}
G':=\langle \alpha\beta\alpha^{-1}\beta^{-1}:\ \alpha,\beta\in G\rangle.
\end{equation*}
The quotient $G/G'$ is an abelian group called the abelianisation of $G$. By the asphericity of graph configuration spaces, we have $Br_N(\Gamma)'\cong H_1(C_N(\Gamma),\ZZ)$ where $H_1$ denotes the first homology group. As it has been shown in \cite{HKRS, KoPark}, for any graph we have $H_1(C_N(\Gamma),\ZZ)\cong \ZZ^m\oplus(\ZZ_2)^p$, where exponents $m$ and $p$ depend on $\Gamma$ and $N$. In particular, $p=0$ if and only if $\Gamma$ is planar. For $2$-connected graphs $H_1$ stabilises, i.e. $H_1(C_N(\Gamma),\ZZ)\cong H_1(C_2(\Gamma),\ZZ)$. Recall that $\Gamma$ is $2$-connected if between any two vertices there exist at least two independent paths. Presentation of $Br_N(\Gamma)$ that has $m+p$ generators is called minimal. Such a presentation can be constructed for any graph using Morse-theoretic methods from \cite{KoPark}. Importantly, for planar graphs the existence of a minimal presentation implies that $Br_N(\Gamma)$ is commutator-related (Theorem 4.6 in \cite{KoPark}). This means that relators $\{R_i\}$ in minimal presentations for planar graphs belong to $Br_N(\Gamma)'$.

\subsection{Morse presentations}
Graph configuration spaces have homotopy types of $CW$-complexes. There are different ways to obtain a $CW$-complex as a deformation retract of $C_N(\Gamma)$, one of which is due to Abrams \cite{AbramsPhD} and an other one due to \'{S}wi\k{a}tkowski \cite{swiatkowski}. The algorithm we analyse in this paper relies on Abrams's complex which we denote by $D_N(\Gamma)$. The deformation retraction $C_N(\Gamma)\to D_N(\Gamma)$ is valid if graph $\Gamma$ is {\it sufficiently subdivided}. This means that one has to subdivide edges of $\Gamma$ by adding an appropriate number of vertices of degree $2$ so that the following conditions are met \cite{Abrams-subdiv}. 
\begin{enumerate}
\item Each path between distinct essential vertices (vertices of degree not equal to 2) contains at least $N-1$ edges.
\item Each nontrivial cycle in $\Gamma$ contains at least $N+1$ edges.
\end{enumerate}
An important property of $D_N(\Gamma)$ is that it is a regular cube complex. This means that its cells are cubes that are glued with each other by identifying their faces (gluing maps are injective). Cells of $D_N(\Gamma)$ are denoted as sets of cardinality $N$ whose elements are either edges or vertices of $\Gamma$, all disjoint with each other. While computing graph braid groups we will only be interested in one- and two-dimensional cells of $D_N(\Gamma)$. Hence, let us write down explicitly the general form of a $1$-cell and a $2$-cell. A $1$-cell of $D_N(\Gamma)$ is of the form 
\begin{equation}\label{1cell}
\{e,v_1,\dots,v_{N-1}\}
\end{equation}
where $e\in E(\Gamma)$, $\{v_1,\dots,v_{N-1}\}\subset V(\Gamma)$, $v_i\neq v_j$ for $i\neq j$ and $e\cap v_i=\emptyset$ for all $i$. Similarly, a general $2$-cell of $D_N(\Gamma)$ is of the form 
\begin{equation}\label{2cell}
\{e,e',v_1,\dots,v_{N-2}\}
\end{equation}
where $\{e,e'\}\subset E(\Gamma)$, $\{v_1,\dots,v_{N-2}\}\subset V(\Gamma)$, $v_i\neq v_j$ for $i\neq j$ and $e\cap v_i=e'\cap v_i=\emptyset$ for all $i$. In order to define a boundary map, we choose a spanning tree $T\subset \Gamma$ and order its vertices in the following way. We choose a planar embedding of $T$ and choose a vertex of degree $1$ to be the root of $T$. This choice fully determines a boundary map on $D_N(\Gamma)$, the resulting Morse complex and presentation of $Br_N(\Gamma)$. The root has label $1$. Next, we move along the tree from the root and number the consecutive vertices with consecutive natural numbers. When a junction od degree $d$ is met, the branches are indexed by $0,1,\hdots,d-1$ where branch $0$ is the one that leads to the root and the remaining branches are indexed increasingly in the clockwise direction from branch $0$. The priority in numbering have (unnumbered) vertices that lie on the branch with the lowest index. After finishing the labelling process, the vertices of $\Gamma$ form a totally ordered set. Every edge $e\in E(\Gamma)$ has its initial and final vertex which are denoted by $\iota(e)$ and $\tau(e)$ respectively and satisfy $\tau(e)<\iota(e)$. This gives an orientation of $1$-cells of $D_N(\Gamma)$. Namely, a cell of the form (\ref{1cell}) is oriented from $\{\iota(e),v_1,\dots,v_{N-1}\}$ to $\{\tau(e),v_1,\dots,v_{N-1}\}$. Presentations of $Br_N(\Gamma)$ will be phrased in terms of oriented $1$-cells and their inverses treated as an alphabet. To every $2$-cell (\ref{2cell}) we assign its boundary word as follows (Fig. \ref{fig:2cell})
\begin{figure}
\centering
\includegraphics{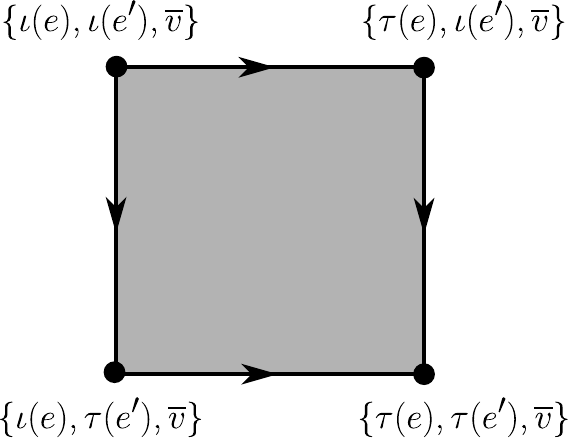}
\caption{A $2$-cell of $D_N(\Gamma)$ and its oriented boundary.}
\label{fig:2cell} 
\end{figure}
\begin{gather}\label{bnd-word}
\nonumber \{e,\iota(e'),\overline{v}\}\{e',\tau(e),\overline{v}\}\{e,\tau(e'),\overline{v}\}^{-1}\times\\\times\{e',\iota(e),\overline{v}\}^{-1}
\end{gather}
where $\overline v$ is a shorthand notation for $v_1,\hdots,v_{N-2}$.

The morse complex $\tilde D_N(\Gamma,T)$ is constructed via a Morse matching $W$ on $D_N(\Gamma)$. $W$ is a collection of  functions $\{W_i\}_{i=0}^{\dim D_N(\Gamma)-1}$, each of which is a function from the set of $i$-cells of $D_N(\Gamma)$ to the set of $i+1$-cells of $D_N(\Gamma)$. Each $W_i$ is a partial function which means that it is not surjective and  its domain is only a subset of $i$-cells, called the set of {\it redundant} $i$-cells. Cells that belong to the image of $W_i$ are called {\it collapsible}. The sets of redundant and collapsible $i$-cells are always disjoint. Moreover, if $W_i(\sigma)=\tau$, then $\tau$ is an $i+1$-cell whose boundary contains cell $\sigma$. Cells which are neither collapsible nor redundant are called {\it critical} and these are the cells that constitute the Morse complex. A Morse matching has to satisfy a few more general conditions, for which we refer the reader to \cite{FS12}. Let us next proceed to the exact form of the Morse matching that we will use. We will focus on functions $W_0$ and $W_1$ as these are the relevant ones in computing $Br_N(\Gamma)$. Intuitively, the Morse matching gives a set of rules to slide particles down the tree $T$ as if the particles were attracted to the root. For any vertex $v\in V(\Gamma)$ we define its corresponding edge $e(v)$ as the unique edge in $T$ which satisfies $\iota(e(v))=v$. Let $\sigma$ be a $0$-cell or a $1$-cell. This means that $\sigma$ is either a subset of $N$ vertices of $\Gamma$ or $\sigma$ is of the form (\ref{1cell}).  We say that vertex $v\in\sigma$ is unblocked if $\left(\sigma-\{v\}\right)\cup e(v)$ is a cell of $D_N(\Gamma)$. In other words, one can slide $v$ down the tree without colliding with other elements of $\sigma$. Otherwise, vertex $v$ is called blocked. Another important notion is the notion of a non-order-respecting edge. An edge $e\in\sigma$ is non-order-respecting if i) $e$ is not in $T$ (in that case $e$ is also called a deleted edge) or ii) there is a vertex $v\in\sigma$ such that $\iota(e)>v>\tau(e)$ and $e(v)\cap e=\tau(e)$. Otherwise, $e$ is order-respecting. Intuitively, this gives a priority rule for particles meeting at junctions of $T$ -- the particle occupying the branch of the lowest index has the priority to move. Critical cells are now easily characterised as those whose {\it all vertices are blocked and all edges are non-order-respecting}. Moreover, we will always choose the spanning tree $T$ so that there is just one critical $0$-cell. Such a critical $0$-cell is necessarily of the form $\{1,2,\hdots,N\}$. It follows that all other $0$-cells of $D_N(\Gamma)$ are redundant. A $0$-cell $\sigma^{(0)}$ is mapped by $W_0$ to a $1$-cell by replacing the lowest unblocked vertex $v\in\sigma$ by its corresponding edge $e(v)$. A $1$-cell, $\sigma^{(1)}$, is redundant if and only if it is not in the image of $W_0$ and it is not critical. If this is the case, then $W(\sigma^{(1)})$ is determined by replacing the lowest unblocked vertex $v\in\sigma^{(1)}$ with $e(v)$. Now we have all the building blocks that are needed to compute the Morse presentation of $Br_N(\Gamma)$.  Denote by $w_i$ an arbitrary word from the alphabet built on oriented $1$-cells of $D_N(\Gamma)$ and their inverses. The following two theorems constitute a foundation of our further considerations.
\begin{theorem}[\cite{FSbraid,FS12}]\label{thm:moves}
$Br_N(\Gamma)$ is generated by all critical $1$-cells subject to relations that come from boundary words (\ref{bnd-word}) of critical $2$-cells by the following set of moves.
\begin{enumerate}
\item Free cancellation. -- If $w=w_1 \sigma \sigma^{-1}w_2$ or $w=w_1 \sigma^{-1}\sigma w_2$, do $w\to w_1w_2$.
\item Collapsing. -- If $w=w_1 \sigma w_2$ or $w=w_1 \sigma^{-1} w_2$ and $\sigma$ is collapsible, do $w\to w_1w_2$.\
\item Simple homotopy. -- If $w=w_1 \sigma w_2$ or $w=w_1 \sigma^{-1} w_2$ and $\sigma w_3$ is a boundary word of a $2$-cell $\tau$ such that $W_1(\sigma)=\tau$, do $w\to w_1w_3^{-1}w_2$ or  $w\to w_1w_3w_2$ respectively.
\end{enumerate}
By iterating the above set of moves, one ends up with an invariant word $\tilde w$ which consists only of critical $1$-cells.
\end{theorem}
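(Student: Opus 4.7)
The plan is to derive the theorem from the standard presentation of the fundamental group of a $2$-dimensional CW complex combined with Forman's discrete-Morse collapses induced by $W$. Since $C_N(\Gamma)$ deformation retracts onto the regular cube complex $D_N(\Gamma)$, we have $Br_N(\Gamma)\cong \pi_1(D_N(\Gamma))$, and the standard $2$-complex presentation realises $Br_N(\Gamma)$ as the group generated by all oriented $1$-cells subject to the relators given by the boundary words (\ref{bnd-word}) of all $2$-cells (after fixing a spanning tree of the $1$-skeleton to trivialise a subset of the generators). The target of the theorem is the Morse-refined version of this presentation, in which generators are restricted to critical $1$-cells and relators to boundary words of critical $2$-cells, once the listed moves are applied.

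I would then invoke the fundamental theorem of discrete Morse theory: the acyclic matching $W$ yields a sequence of elementary collapses of the Morse pairs $(\sigma, W(\sigma))$, deformation retracting $D_N(\Gamma)$ onto $\tilde D_N(\Gamma,T)$, a CW complex whose cells are the critical cells of $W$. In particular $\pi_1(\tilde D_N(\Gamma,T))\cong \pi_1(D_N(\Gamma))=Br_N(\Gamma)$. Each of the three moves is then identified with the Tietze transformation that implements a type of collapse at the level of the presentation. The collapsible $1$-cells, being the $W_0$-images of the redundant $0$-cells, together with the unique critical $0$-cell form a spanning tree of the $1$-skeleton of $D_N(\Gamma)$, so they are trivial in $\pi_1$; this is the \emph{collapsing} move. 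A redundant $1$-cell $\sigma$ is paired with a collapsible $2$-cell $\tau=W_1(\sigma)$, in whose boundary $\sigma$ appears exactly once (by regularity of the cube complex), so cyclically rewriting (\ref{bnd-word}) for $\tau$ as $\sigma w_3$ yields the relation $\sigma=w_3^{-1}$ that is used to substitute for $\sigma$ wherever it occurs --- this is the \emph{simple homotopy} move. \emph{Free cancellation} is the free-group axiom and leaves the represented element untouched. Applying these three moves to the boundary word of a critical $2$-cell therefore produces an equivalent element of $Br_N(\Gamma)$, expressed in critical $1$-cells alone, which serves as the corresponding relator of the Morse presentation.

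The main technical obstacle is to show that this rewriting procedure terminates rather than looping. This is precisely where acyclicity of $W$ is used. Assigning to each word the multiset of Morse values of its non-critical letters and ordering such multisets lexicographically, one checks that every collapsing step strictly decreases the multiset by deleting a non-critical letter, while every simple-homotopy step replaces $\sigma$ by the letters of $w_3$, each of which lies strictly below $\sigma$ in the acyclic Morse order (the boundary of $\tau=W_1(\sigma)$ contains, apart from $\sigma$ itself, only cells to which $\sigma$ is not connected by a $W$-alternating path). Well-foundedness of this order guarantees that the procedure halts with an invariant word $\tilde w$ supported on critical $1$-cells, and combined with the homotopy equivalence $D_N(\Gamma)\simeq \tilde D_N(\Gamma,T)$ this delivers the asserted presentation of $Br_N(\Gamma)$.
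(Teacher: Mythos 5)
The paper does not prove this theorem; it is imported verbatim from Farley--Sabalka \cite{FSbraid,FS12}, so there is no internal proof to compare against. Your argument is essentially the one given in those references: identify $Br_N(\Gamma)$ with $\pi_1(D_N(\Gamma))$ via Abrams's retraction, take the edge-path presentation of the $2$-complex, observe that the collapsible $1$-cells form a spanning tree of the $1$-skeleton (justifying the collapsing move), read the simple-homotopy move as the Tietze substitution $\sigma\mapsto w_3^{-1}$ coming from the relator $\sigma w_3$ of the collapsible $2$-cell $W_1(\sigma)$, and use acyclicity of the matching to get a well-founded order under which the rewriting terminates. This is sound in outline. The one point you leave untouched is the word ``invariant'' in the statement, i.e.\ that $\tilde w$ does not depend on the order in which the moves are applied; this confluence is what makes the notation $\widetilde{b(\tau)}$ well defined, although it is not needed for the correctness of the resulting presentation, since every move preserves the element of $\pi_1$ being represented and any terminal word therefore serves as a valid relator.
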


\begin{theorem}{\cite{FSbraid}}\label{thm:morse-presentation}
Let $T\subset \Gamma$ be a spanning tree such that the corresponding Morse complex consists of only one critical $0$-cell. Then, 
\begin{equation}\label{morse-presentation}
Br_N(\Gamma)=\left\langle\Sigma^{(1)}|\ \widetilde{b(\tau)}=1, \tau\in\Sigma^{(2)}\right\rangle,
\end{equation}
where $\Sigma^{(i)}$ denotes the set of critical $i$-cells of Morse complex $\tilde D_N(\Gamma,T)$ and $b(\sigma)$ denotes the boundary word of $\sigma$, as in (\ref{bnd-word}).
\end{theorem}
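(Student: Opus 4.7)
My plan is to deduce the presentation in three steps: (i) replace the configuration space by the cube complex $D_N(\Gamma)$, (ii) collapse $D_N(\Gamma)$ onto the Morse complex $\tilde D_N(\Gamma,T)$ via the Morse matching $W$, and (iii) read off a presentation of $\pi_1$ from the resulting CW structure using Theorem \ref{thm:moves} to rewrite attaching words.

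For step (i), I would invoke Abrams's deformation retraction $C_N(\Gamma)\simeq D_N(\Gamma)$ (valid because the graph is assumed sufficiently subdivided), so that $Br_N(\Gamma)=\pi_1(D_N(\Gamma))$. For step (ii), I would quote Forman's fundamental result of discrete Morse theory: a Morse matching on a regular CW complex induces a sequence of elementary collapses realising a homotopy equivalence between the complex and its Morse complex, whose cells are in bijection with the critical cells of the matching. Applied to $(D_N(\Gamma),W)$, this yields $D_N(\Gamma)\simeq \tilde D_N(\Gamma,T)$ and hence $Br_N(\Gamma)\cong \pi_1(\tilde D_N(\Gamma,T))$.

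For step (iii), because the hypothesis on $T$ gives a unique critical $0$-cell, the Morse complex has a single vertex. By the standard description of the fundamental group of a CW complex with one $0$-cell, $\pi_1(\tilde D_N(\Gamma,T))$ is generated by the critical $1$-cells (each a loop at the unique vertex), modulo the attaching words of the critical $2$-cells read in the $1$-skeleton. The content of Theorem \ref{thm:moves} is precisely that these attaching words are obtained from the original boundary words \eqref{bnd-word} of critical $2$-cells by the three rewriting moves: free cancellation is the usual relation $\sigma\sigma^{-1}=1$, the collapsing move reflects that collapsible cells become null-homotopic after the Forman collapse, and the simple-homotopy move implements the fact that a redundant $1$-cell $\sigma$ with $W_1(\sigma)=\tau$ is, after the collapse of the free pair $(\sigma,\tau)$, homotopic to the complementary portion of $\partial\tau$. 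Iterating these moves on each $b(\tau)$ for $\tau\in\Sigma^{(2)}$ produces the invariant word $\widetilde{b(\tau)}$ in the critical alphabet, giving exactly the relators in \eqref{morse-presentation}.

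The main obstacle I expect is the bookkeeping required to justify that iterating moves 1--3 terminates and yields a word depending only on $\tau$ (the ``invariant word'' claim of Theorem \ref{thm:moves}), since redundant and collapsible cells are intertwined and a reduction step can introduce new occurrences of non-critical cells. I would handle this by defining a well-founded order on cells induced by the Morse function (or by the discrete gradient flow length), verifying that each simple-homotopy or collapsing move strictly decreases this order on non-critical generators, and checking local confluence so that the terminal word is independent of the order in which moves are applied. Once termination and uniqueness are established, the identification of $\widetilde{b(\tau)}$ with the attaching word of $\tau$ in $\tilde D_N(\Gamma,T)$ is immediate, and \eqref{morse-presentation} follows from the CW-presentation of $\pi_1$.
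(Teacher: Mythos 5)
Your outline is correct and follows exactly the standard argument (Abrams's retraction, Forman's collapse onto the Morse complex, and the one-vertex CW presentation of $\pi_1$ with attaching words rewritten via Theorem \ref{thm:moves}); the paper itself states this theorem without proof, citing \cite{FSbraid}, where precisely this route is taken, including your termination argument for the rewriting moves via the discrete gradient flow. No gaps.
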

A {\it Python} implementation of the above Theorems \ref{thm:moves} and \ref{thm:morse-presentation} created by the authors of this paper can be found on website \cite{code} in a program which computes Morse presentations of graph braid groups.

\begin{example}[Morse presentations for a $\Theta$-graph for $N\leq 4$.]\label{ex:theta-morse}
\begin{figure}
\centering
\includegraphics[width=.35\textwidth]{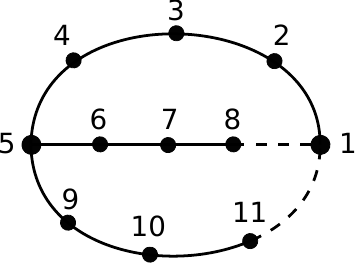}
\caption{A $\Theta$-graph sufficiently subdivided for $N\leq5$ together with a choice of a spanning tree and vertex order.}
\label{fig:theta-morse} 
\end{figure}
Consider a $\Theta$-graph on Fig. \ref{fig:theta-morse} which is sufficiently subdivided for $5$ particles (the reasons for subdividing the graph more than necessary will become clear in Subsection \ref{sec:exchanges}). For $N=2$ we have the following critical $1$-cells in $D_2(\Gamma_{\Theta})$:
\begin{equation}\label{theta-gen-N2}
\tilde\alpha_1=\left\{e_{1}^8,2\right\},\ \tilde\alpha_2=\left\{e_1^{11},2\right\},\ \tilde\gamma=\left\{e_5^9,6\right\}.
\end{equation}
There are no critical $2$-cells in $D_2(\Gamma_{\Theta})$, hence we have reproduced the result from Fig. \ref{theta-braid} -- $Br_2(\Gamma_{\Theta})$ is a free group on $3$ generators (\ref{theta-gen-N2}). For $N=3$, the critical $1$-cells read:
\begin{gather}\label{theta-gen-N3}
\tilde\alpha_1=\left\{e_{1}^8,2,3\right\},\ \tilde\alpha_2=\left\{e_1^{11},2,3\right\},\\ \nonumber \tilde\gamma=\left\{e_5^9,1,6\right\},\\ \nonumber \sigma_1=\left\{e_5^9,6,7\right\},\ \sigma_2=\left\{e_5^9,6,10\right\},
\end{gather}
while the critical $2$-cells read:
\begin{gather*}
\tau_1=\left\{e_{1}^8,e_{5}^9,6\right\},\ \tau_2=\left\{e_{1}^{11},e_{5}^9,6\right\}.
\end{gather*}
It is straightforward to verify (perhaps with the aid of a computer program) that the boundary words are respectively
\begin{gather}\label{theta-rel-N3}
\widetilde{b(\tau_1)}=\tilde\alpha_1\tilde\gamma^{-1} \tilde\alpha_1^{-1}\tilde\gamma^{-1}\sigma_1\\
\nonumber \widetilde{b(\tau_2)}=\tilde\alpha_2\tilde\gamma^{-1} \tilde\alpha_2^{-1}\sigma_2.
\end{gather}
From the corresponding pair of relators we get that i) $\sigma_1=\tilde\gamma\tilde\alpha_1\tilde\gamma\tilde\alpha_1^{-1}$, ii) $\sigma_2=\tilde\alpha_2\tilde\gamma\tilde\alpha_2^{-1}$. Hence, via Tietze transformations we obtain an analogous situation as for $N=2$, i.e.
\[Br_3(\Gamma_\Theta)=\langle\tilde\alpha_1,\tilde\alpha_2,\tilde\gamma\rangle.\]
Finally, let us demonstrate that $Br_4(\Gamma_\Theta)$ is no longer a free group. The critical $1$-cells read:
\begin{gather}\label{theta-gen-N4}
\tilde\alpha_1=\left\{e_{1}^8,2,3,4\right\},\ \tilde\alpha_2=\left\{e_1^{11},2,3,4\right\},\\ \nonumber \tilde\gamma=\left\{e_5^9,1,2,6\right\},\\ \nonumber \sigma_1=\left\{e_5^9,1,6,7\right\},\ \sigma_2=\left\{e_5^9,1,6,10\right\}, \\
\nonumber \sigma_3=\left\{e_5^9,6,7,8\right\},\ \sigma_4=\left\{e_5^9,6,7,10\right\}, \\
\nonumber \sigma_5=\left\{e_5^9,6,10,11\right\}.
\end{gather}
while the critical $2$-cells read:
\begin{gather*}
\tau_1=\left\{e_{1}^8,e_{5}^9,2,6\right\},\ \tau_2=\left\{e_{1}^{11},e_{5}^9,2,6\right\}, \\
\tau_3=\left\{e_{1}^{11},e_{5}^9,6,7\right\},\ \tau_4=\left\{e_{1}^8,e_{5}^9,6,7\right\}, \\
\tau_5=\left\{e_{1}^{11},e_{5}^9,6,10\right\},\ \tau_6=\left\{e_{1}^8,e_{5}^9,6,10\right\}.
\end{gather*}
Boundary words for cells $\tau_1$ and $\tau_2$ are exactly the same expressions as in (\ref{theta-rel-N3}). Besides that, we have
\begin{gather}\label{theta-rel-N4}
\widetilde{b(\tau_3)}=\tilde\alpha_2 \sigma_1^{-1} \tilde\alpha_2^{-1}\sigma_4,\\
\nonumber \widetilde{b(\tau_4)}=\tilde\alpha_1 \sigma_1^{-1}\tilde\alpha_1^{-1}\tilde\gamma^{-1}\sigma_3,\\
\nonumber \widetilde{b(\tau_5)}=\tilde\alpha_2\sigma_2^{-1}\tilde\alpha_2^{-1}\sigma_5, \\
\nonumber \widetilde{b(\tau_6)}=\tilde\gamma\tilde\alpha_1\sigma_2^{-1}\tilde\alpha_1^{-1}\tilde\gamma^{-1}\sigma_2^{-1}\sigma_4.
\end{gather}
To obtain a minimal presentation of $Br_4(\Gamma_\Theta)$ we realise the following Tietze transformations. From $\widetilde{b(\tau_3)}=1$ and from the expression for $\sigma_1$ we extract $\sigma_4=\tilde\alpha_2\tilde\gamma\tilde\alpha_1\tilde\gamma\tilde\alpha_1^{-1}\alpha_2^{-1}$. Similarly, from $\widetilde{b(\tau_4)}=1$ and $\widetilde{b(\tau_5)}=1$ we obtain expressions for $\sigma_3$ and $\sigma_5$ respectively. Hence, the only nontrivial relator in $Br_4(\Gamma_\Theta)$ comes from $\widetilde{b(\tau_6)}=1$ after plugging in expressions for $\sigma_4$ and $\sigma_2$. One can rewrite the result as follows
\begin{equation}\label{minimal-thetaN4}
Br_4(\Gamma_\Theta)=\left\langle\tilde\alpha_1,\tilde\alpha_2,\tilde\gamma|\ \left[\tilde\gamma,Ad_{\tilde\alpha_1\tilde\alpha_2}(\tilde\gamma)\right]=1\right\rangle,
\end{equation}
where we use a shorthand notation $Ad_{h}(g):=h g h^{-1}$.
\end{example}

\subsection{Minimal presentations}\label{sec:minimal}
Exemple \ref{ex:theta-morse} shows some of the crucial features of computations related to Morse presentations of graph braid groups. First of all, the number of generators can be greatly reduced via Tietze transformations by utilising some of the relators. As shown in \cite{KoPark}, this can be done in a systematic way by dividing the set critical $1$-cells into sets of so-called {\it pivotal}, {\it separating} and {\it free} cells. Free cells automatically contribute to the minimal Morse presentation. All pivotal cells and some of the separating cells can be removed via boundary words of appropriate critical $2$-cells. In this section, we will briefly review this construction. Secondly, for a graph which is sufficiently subdivided for $N$ particles, boundary words of critical $2$-cells for $N'$ particles, $N'< N$, are inherited as boundary words of appropriate critical $2$-cells for $N'+1$ particles. We will utilise this fact in section \ref{sec:stabilisation}. We start this section with recalling the following crucial lemma.
\begin{lemma}[Minimal presentations \cite{KoPark}]\label{lemma:minimal}
Group $Br_N(\Gamma)$ has a minimal presentation over $m_{N,\Gamma}+p_{N,\Gamma}$ generators for $m_{N,\Gamma}$ and $p_{N,\Gamma}$ that are natural numbers which determine $H_1(C_N(\Gamma),\ZZ)=\ZZ^{m_{N,\Gamma}}\oplus\left(\ZZ_2\right)^{p_{N,\Gamma}}$. 
\end{lemma}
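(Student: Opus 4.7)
The plan is to establish matching lower and upper bounds. The lower bound is group-theoretic and essentially automatic once one identifies the abelianisation of $Br_N(\Gamma)$ with $H_1(C_N(\Gamma),\ZZ)$, while the upper bound requires an explicit construction via a refined analysis of the Morse presentation from Theorem \ref{thm:morse-presentation}.

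For the lower bound, recall that any presentation $\langle \alpha_1,\ldots,\alpha_r \mid R_1,\ldots,R_s\rangle$ of $Br_N(\Gamma)$ induces a generating set of size $r$ on the abelianisation $Br_N(\Gamma)/Br_N(\Gamma)'$. By the asphericity of $C_N(\Gamma)$ together with Hurewicz's theorem, this abelianisation is precisely $H_1(C_N(\Gamma),\ZZ) \cong \ZZ^{m_{N,\Gamma}}\oplus (\ZZ_2)^{p_{N,\Gamma}}$. Tensoring with $\ZZ_2$ yields $(\ZZ_2)^{m_{N,\Gamma}+p_{N,\Gamma}}$, whose minimum generating set over $\ZZ_2$ has cardinality $m_{N,\Gamma}+p_{N,\Gamma}$; by Nakayama-type reasoning this is also the minimum generating number for $\ZZ^{m_{N,\Gamma}}\oplus (\ZZ_2)^{p_{N,\Gamma}}$ itself. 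Consequently no presentation of $Br_N(\Gamma)$ can use fewer than $m_{N,\Gamma}+p_{N,\Gamma}$ generators.

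For the upper bound, I would begin with the Morse presentation from Theorem \ref{thm:morse-presentation}, whose generators are the critical $1$-cells of $\tilde D_N(\Gamma,T)$ and whose relators are the reduced boundary words $\widetilde{b(\tau)}$ of the critical $2$-cells. The key step, following the strategy indicated in the excerpt, is to partition the critical $1$-cells into three disjoint classes: \emph{pivotal} cells, \emph{separating} cells, and \emph{free} cells. The class is determined by how a critical $1$-cell appears inside the reduced boundary words of critical $2$-cells; in particular, a pivotal cell appears in a unique relator with multiplicity $\pm 1$ and in a position that allows it to be solved for and substituted away via a Tietze transformation. Performing these transformations iteratively eliminates all pivotal generators together with the relators used to eliminate them, without introducing new ones. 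Example \ref{ex:theta-morse} already exhibits this mechanism: the cells $\sigma_1,\ldots,\sigma_5$ there play the role of pivotal generators and are successively removed using $\widetilde{b(\tau_3)},\widetilde{b(\tau_4)},\widetilde{b(\tau_5)}$.

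After eliminating all pivotal generators, the remaining generators are the free cells and the separating cells. The claim to verify is that a further round of Tietze eliminations among separating cells (necessarily more delicate, because each separating cell appears in several relators and one must track the effect on the abelianisation) reduces the total number of generators exactly down to $m_{N,\Gamma}+p_{N,\Gamma}$. To prove this, I would pass to the abelianisation at each stage: the matrix of relators viewed over $\ZZ$ has Smith normal form with $p_{N,\Gamma}$ diagonal entries equal to $2$ and $\mathrm{rk}\,H_1 - m_{N,\Gamma}$ unit entries, so the number of surviving generators in any Tietze reduction that does not introduce extra relators is bounded below by $m_{N,\Gamma}+p_{N,\Gamma}$. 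The main obstacle I anticipate is the combinatorial bookkeeping: showing that the pivotal/separating/free classification of critical cells is globally consistent with the Smith normal form of the abelianised boundary map, so that no separating cell needs to be retained beyond what is forced by the torsion summands $(\ZZ_2)^{p_{N,\Gamma}}$. Once this matching of the combinatorial and homological data is established, the upper bound meets the lower bound and the lemma follows.
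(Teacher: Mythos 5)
The paper does not actually prove this lemma --- it is quoted from Ko and Park \cite{KoPark} and used as a black box; the surrounding subsection only names the ingredients (a spanning tree satisfying conditions T1--T3 and the classification of critical $1$-cells into pivotal, separating and free) without carrying out the argument. So your proposal can only be judged against what a complete proof would require. Your lower bound is correct and complete: asphericity plus Hurewicz identifies the abelianisation $Br_N(\Gamma)/Br_N(\Gamma)'$ with $H_1(C_N(\Gamma),\ZZ)\cong\ZZ^{m_{N,\Gamma}}\oplus(\ZZ_2)^{p_{N,\Gamma}}$, and reducing mod $2$ shows this abelian group, hence the group itself, cannot be generated by fewer than $m_{N,\Gamma}+p_{N,\Gamma}$ elements. (A minor slip: for a presentation with $r$ generators the Smith normal form of the relation matrix has $r-m_{N,\Gamma}-p_{N,\Gamma}$ unit entries, not $\mathrm{rk}\,H_1-m_{N,\Gamma}$.)

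The gap is in the upper bound, which is the actual content of the lemma. Two things go wrong. First, your characterisation of a pivotal cell (``appears in a unique relator with multiplicity $\pm1$ in a position that allows it to be solved for'') is invented: in \cite{KoPark} the pivotal/separating/free trichotomy is defined intrinsically from the combinatorics of the critical cell (the location of its edge relative to the spanning tree and which vertices are blocked on which branches), and it is a \emph{theorem}, resting on the technical tree conditions T1--T3 that your sketch never invokes, that every pivotal cell can be isolated in the reduced boundary word of a suitable critical $2$-cell and eliminated. Second, and more seriously, your argument that the elimination process terminates at exactly $m_{N,\Gamma}+p_{N,\Gamma}$ generators is only the lower bound restated: the Smith-normal-form computation shows one cannot go below $m_{N,\Gamma}+p_{N,\Gamma}$, not that the Tietze reductions can reach it. Reaching it requires ordering the pivotal $1$-cells and critical $2$-cells so that each elimination is well defined and no relator is consumed twice, and then matching every surviving non-free generator with a $\ZZ$ or $\ZZ_2$ summand of $H_1$; you explicitly flag this as ``the main obstacle'' but do not resolve it. As written, the proposal proves the lower bound and correctly locates, but does not close, the gap in the upper bound.
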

As a corollary, we obtain that for a planar graph $Br_N(\Gamma)$ has a presentation over $m_{N,\Gamma}$ generators. Moreover, if $\Gamma$ is $2$-connected, then the number of generators of a minimal presentation stabilises with $N$ for $N\geq 2$, i.e. $m_{N,\Gamma}=m_{2,\Gamma}$ and $p_{N,\Gamma}=p_{2,\Gamma}$. This can be observed in example \ref{ex:theta-morse} where for $N\geq 2$ we have $H_1(C_N(\Gamma_\Theta))=H_1(C_2(\Gamma_\Theta))=\ZZ^3$.

In order to find a minimal Morse presentation of $Br_N(\Gamma)$ we have to introduce a few technical notions from paper \cite{KoPark}. However, in order to keep the presentation clear and concise, when possible, we will skip some of the details. 

We say that an edge $e\in E(\Gamma)$ is separated in $T\subset \Gamma$ by $v\in V(\Gamma)$ iff $\iota(e)$ and $\tau(e)$ lie in two distinct connected components of $T-\{v\}$. The first technical step is to choose a spanning tree $T\subset \Gamma$ which satisfies the following conditions of lemma 2.5 in \cite{KoPark}: T1) For every edge $e\in E(\Gamma)-E(T)$ we have that $\iota(e)$ is of valency $2$. T2) Every edge $e\in E(\Gamma)-E(T)$ is not separated in $T$ by any vertex $v\in V(\Gamma)$ such that $v<\tau(e)$. For the sake of completeness, we mention that there is an additional property T3 which is phrased in terms of other geometric properties of $\Gamma$, however we will not write it down here. We only point out that in paper \cite{KoPark} there is an algorithmic way to choose a tree which satisfies properties T1, T2 and T3. The choice of such a tree is essential for definitions of pivotal, separating and free cells to work. One of the key notions is the {\it size} of a critical $1$-cell denoted by $s(\sigma)$. For a critical cell (\ref{1cell}), $s(\sigma)$ is the number of vertices in $\sigma$ that are blocked behind $\tau(e)$ on branches incident to $\tau(e)$ with index greater than $0$. In example \ref{ex:theta-morse}, cells from equation (\ref{theta-gen-N4}) have sizes $s(\tilde\alpha_1)=s(\tilde\alpha_2)=0$, $s(\tilde\gamma)=1$, $s(\sigma_2)=s(\sigma_2)=2$ and $s(\sigma_3)=3$.

In the remaining part of this subsection, we specify our considerations to $2$-connected graphs, as we anticipate that such graphs appear in most of the physically relevant situations. The notion of the size of a critical cell was necessary for introducing a simple criterion for separating out most of the pivotal cells. We state this criterion without a proof in the form of the following fact.
\begin{fact}[\cite{KoPark}]
Every critical $1$-cell $\sigma$ with $s(\sigma)\geq 2$ is pivotal, hence can be expressed as a word in free and separating $1$-cells. 
\end{fact}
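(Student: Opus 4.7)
The plan is to show that every critical $1$-cell $\sigma$ with $s(\sigma) \geq 2$ appears in the boundary word of a carefully chosen critical $2$-cell $\tau$, in such a way that the Morse-reduced relation $\widetilde{b(\tau)} = 1$ from Theorem~\ref{thm:moves} can be solved for $\sigma$ in terms of critical $1$-cells of strictly smaller size. An induction on $s(\sigma)$ then expresses $\sigma$ as a word in critical $1$-cells of size $\leq 1$, which by the taxonomy of \cite{KoPark} consist precisely of the separating and free cells.

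First I would construct the candidate $2$-cell. Write $\sigma = \{e, v_1, \ldots, v_{N-1}\}$ and let $S \subset \{v_1, \ldots, v_{N-1}\}$ be the set of vertices blocked behind $\tau(e)$ on positively-indexed branches at $\tau(e)$, so $|S| \geq 2$ by hypothesis. Distinguish one element $v^* \in S$—natural choices are the vertex on the lowest-indexed positive branch at $\tau(e)$ occupying the slot nearest $\tau(e)$, so that $e(v^*)$ is incident to $\tau(e)$—and set $\tau = (\sigma \setminus \{v^*\}) \cup \{e(v^*)\}$. I would then verify that $\tau$ is a valid $2$-cell (pairwise disjointness of the two edges and $N-2$ remaining vertices follows from $\sigma$ being a cell together with conditions T1 and T2) and that $\tau$ is \emph{critical}: the existence of a second vertex of $S$ still blocking positions adjacent to $\tau(e)$ after removing $v^*$ is exactly what keeps every surviving vertex blocked, and the non-order-respecting status of $e$ is inherited while $e(v^*)$ is non-order-respecting by virtue of $v^*$ itself formerly sitting on a positive branch above $\tau(e)$.

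Next I would analyse the four $1$-cells in the boundary word (\ref{bnd-word}) of $\tau$. Two of them come from varying $v^*$ along $e(v^*)$: the cell with $\iota(e(v^*)) = v^*$ is exactly $\sigma$, while the cell with $\tau(e(v^*))$ is a critical $1$-cell $\sigma'$ in which $v^*$ has moved one step toward the root, giving $s(\sigma') < s(\sigma)$. The other two $1$-cells carry $e(v^*)$ together with $\iota(e)$ or $\tau(e)$; by construction of the Morse matching $W_1$ they are either collapsible, so removed by move~2 of Theorem~\ref{thm:moves}, or they map under $W_1$ to a $2$-cell whose boundary-word reduction (move~3) expresses them as words in critical $1$-cells of strictly smaller size (pushing a tree edge nearer the root can only decrease the count of blocking vertices at positive branches). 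The resulting reduced relation reads $\sigma = W$ for some word $W$ in cells with size $< s(\sigma)$, together possibly with separating and free cells, completing the inductive step.

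The main obstacle I anticipate is the verification that $\tau$ is critical and, more delicately, the bookkeeping showing that \emph{both} non-$\sigma$ cells containing $e(v^*)$ reduce under moves~2--3 to strictly smaller-size cells without reintroducing $\sigma$ or any cell of size $\geq s(\sigma)$. This requires a careful case analysis using properties T1 and T2 to track blocking as $v^*$ is replaced by $e(v^*)$ and as $\tau(e)$ and $\iota(e)$ swap roles along the boundary; the hypothesis $s(\sigma) \geq 2$ is used precisely to guarantee that criticality is not destroyed by the loss of the single blocking vertex $v^*$.
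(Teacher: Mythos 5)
First, a point of orientation: the paper states this Fact explicitly \emph{without proof}, importing it from \cite{KoPark}, so there is no in-paper argument to compare yours against; I can only assess your proposal on its own terms and against the mechanism visible in the paper's worked examples.

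Your construction breaks at the very first step. You set $\tau=(\sigma\setminus\{v^*\})\cup\{e(v^*)\}$ for a vertex $v^*$ counted by $s(\sigma)$, i.e.\ a \emph{blocked} vertex of the critical cell $\sigma$. But by the paper's definition, $v$ is blocked in $\sigma$ precisely when $(\sigma-\{v\})\cup e(v)$ fails to be a cell of $D_N(\Gamma)$ --- and in a critical cell \emph{every} vertex is blocked. In your preferred case ($v^*$ adjacent to $\tau(e)$ on a positive branch) one sees the failure concretely: $e(v^*)\cap e=\tau(e)\neq\emptyset$, violating the disjointness required of the elements of a $2$-cell (\ref{2cell}); if instead $v^*$ sits higher in its stack, then $\tau(e(v^*))$ is the blocking vertex below it, which already lies in $\sigma$. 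So no valid $2$-cell arises this way, and $\sigma$ is never a literal face of the $2$-cell you intend to use.

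The actual mechanism, visible in Example \ref{ex:theta-morse}, is indirect. The size-$2$ cell $\sigma_1=\{e_5^9,6,7\}$ is eliminated using the critical $2$-cell $\tau_1=\{e_1^8,e_5^9,6\}$, whose second edge $e_1^8$ is a deleted edge far from the junction where $\sigma_1$ has its blocked vertices; $\sigma_1$ enters $\widetilde{b(\tau_1)}$ only after the rewriting moves of Theorem \ref{thm:moves} push the redundant face $\{e_5^9,6,8\}$ through the collapsible $2$-cell $\{e_5^9,e_7^8,6\}$. (That latter $2$-cell does have $\sigma_1$ as a face, but it is collapsible, not critical, so it contributes no relator of its own.) A correct proof therefore has to control the outcome of the Morse flow on the faces of suitably chosen and ordered critical $2$-cells --- this is where Ko and Park's Lemma 2.3, the tree conditions T1--T3, and their ordering of pivotal cells do the real work --- rather than reading $\sigma$ off as a face. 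Your guiding idea of an induction that lowers $s(\sigma)$ is in the right spirit, but the inductive engine you propose does not exist as stated, and your closing assertion that cells of size $\leq 1$ are exactly the free and separating ones is an unproved converse that the Fact itself does not claim.
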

It follows that effectively all relevant generators in a minimal Morse presentation of $Br_N(\Gamma)$ appear already on the level of $N=2$. This can be seen by noting that vertices in a critical cell that are blocked behind the root of $T$ can be ignored to give the corresponding critical cell in $D_2(\Gamma)$. In this way, the minimal set of generators of $Br_N(\Gamma)$ can be found only by considering the two-particle case. For $N>2$, additional work has to be done to eliminate new pivotal cells and make appropriate Tietze transformations in order to recover new relators between the minimal generators from the boundary words of critical $2$-cells. This can be done in an algorithmic way by ordering the pivotal $1$-cells and critical $2$-cells in an appropriate way, as decribed in \cite{KoPark}. We anticipate to incorporate this algorithm in our Python implementation \cite{code}.

\subsection{Relating minimal presentations to particle exchanges}\label{sec:exchanges}
As our considerations from the preceding sections show, it is not clear how to connect generators of $Br_N(\Gamma)$ in its Morse presentation with some physical particle exchanges on $\Gamma$. In this subsection we show how this can be accomplished. Presentations of $Br_N(\Gamma)$, where generators can be directly interpreted as particle exchanges will be called {\it physical presentations}. It turns out that physical presentations can be derived from minimal Morse presentations. However, in order to recover particle exchanges from a minimal Morse presentation of $Br_N(\Gamma)$, one usually has to add some new generators and new relators. 

We start by introducing two classes of loops in $D_n(\Gamma)$. Assume that $T$ is a spanning tree of $\Gamma$ which satisfies conditions T1, T2 and T3 described in subsection \ref{sec:minimal}. The first loop is associated with an exchange of a pair of particles on a $Y$-junction in $T$. More precisely, choose a $Y$-subgraph of $T$ which is spanned on vertices $k,l,m,n$ such that $k<l<m<n$ and vertex $l$ has degree at least $3$. To such a $Y$-subgraph we associate the following word which we call the $Y$-loop.
\begin{gather}\label{y-loop}
\gamma_{k,m,n}(\overline v):=\{e_l^n,k,\overline v\}\{e_l^m,k,\overline v\}^{-1}\{e_k^l,m,\overline v\}^{-1}\times \\ \nonumber  \times\{e_l^n,m,\overline v\}^{-1}\{e_l^m,n,\overline v\}\{e_k^l,n,\overline v\}.
\end{gather}
In the above expression, by $\overline v$ we denote a set of $N-2$ vertices of $V(\Gamma)$ such that $\overline v\cap \{k,l,m,n\}=\emptyset$. A key observation is that if all vertices in $\overline v$ ale blocked in cell $\{e_l^n,m,\overline v\}$, then this cell is critical. Furthermore, we have the following lemma.
\begin{lemma}\label{lemma:y-loop}
Let $\gamma$ be a $Y$-loop in $D_N(\Gamma)$ as in (\ref{y-loop}). If $\overline v=\{1,2,\hdots,N-2\}$, then $\gamma$ is mapped to the Morse complex as critical cell $\{e_l^n,m,\overline v\}^{-1}$.
\end{lemma}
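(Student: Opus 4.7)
The plan is to apply the moves of Theorem \ref{thm:moves} to the six-letter word $\gamma = \sigma_1\sigma_2^{-1}\sigma_3^{-1}\sigma_4^{-1}\sigma_5\sigma_6$, where $\sigma_1,\ldots,\sigma_6$ denote the six oriented $1$-cells appearing in order in (\ref{y-loop}). My aim is to show that $\sigma_4 := \{e_l^n,m,\overline v\}$ is the unique critical $1$-cell among them and that the other five are all collapsible; iterated application of the Collapsing move then turns $\gamma$ into the invariant word $\widetilde\gamma = \sigma_4^{-1}$.

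First, I would verify criticality of $\sigma_4$. Because $\overline v = \{1,\ldots,N-2\}$ consists of the smallest labels, every $i \in \overline v$ with $i>1$ has parent $\tau(e(i)) < i \leq N-2$ lying in $\overline v - \{i\}$, so $e(i)$ meets $\sigma_4 - \{i\}$ and $i$ is blocked; the root vertex $1$ is blocked by default since $e(1)$ does not exist. Vertex $m$ is blocked because its tree-edge $e(m) = e_l^m$ shares the endpoint $l = \tau(e_l^n)$ with the edge $e_l^n$ already in the cell. The edge $e_l^n$ is non-order-respecting because $l < m < n$ and $e(m)\cap e_l^n = \{l\} = \tau(e_l^n)$. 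Hence all vertices are blocked and the edge is non-order-respecting, and $\sigma_4$ is critical.

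Next, I would show the remaining five cells are collapsible by exhibiting for each a $0$-cell mapped onto it by $W_0$. For $\sigma_3, \sigma_5, \sigma_6$ the candidate $0$-cells are $\{l,m,\overline v\}, \{m,n,\overline v\}, \{l,n,\overline v\}$; a direct check (using that $\overline v$ is always blocked and that the relevant tree-edges are disjoint from the remaining vertices) identifies the lowest unblocked vertex as $l$, $m$, $l$ respectively, and noting $e(l) = e_k^l$ and $e(m) = e_l^m$ recovers the original $1$-cells. For $\sigma_1$ and $\sigma_2$ the candidates are $\{n,k,\overline v\}$ and $\{m,k,\overline v\}$; assuming the parent $p_k$ of $k$ in $T$ lies in $\overline v$---which is the intended configuration, where $\overline v\cup\{k\}$ forms a connected subtree of $T$ containing the root, consistent with the choice of spanning tree satisfying conditions T1--T3 from \cite{KoPark}---the vertex $k$ is blocked, so $W_0$ selects $n$ and $m$ respectively, yielding $\sigma_1$ and $\sigma_2$. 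Applying Collapsing to all five reduces $\gamma$ to $\sigma_4^{-1}$.

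The main obstacle is justifying that $p_k \in \overline v$, i.e., that $\sigma_1$ and $\sigma_2$ really are collapsible and not merely redundant. This is a combinatorial condition on the DFS labelling of $T$ and on the chosen $Y$-subgraph. If it were to fail, $\sigma_1$ and $\sigma_2$ would be redundant, and one would have to invoke the Simple Homotopy move with the $2$-cells $\{e_l^n, e(k), \overline v\}$ and $\{e_l^m, e(k), \overline v\}$ before continuing the reduction. By the uniqueness of the invariant word guaranteed by Theorem \ref{thm:moves}, the end result would still be $\sigma_4^{-1}$, although the bookkeeping would be considerably more intricate.
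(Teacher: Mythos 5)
Your identification of $\{e_l^n,m,\overline v\}$ as the unique critical cell among the six letters, and of $\{e_k^l,m,\overline v\}$, $\{e_l^m,n,\overline v\}$, $\{e_k^l,n,\overline v\}$ as collapsible, agrees with the paper's proof and is argued correctly. The gap is in your treatment of the first two letters, $\{e_l^n,k,\overline v\}$ and $\{e_l^m,k,\overline v\}$. You classify them as collapsible under the hypothesis that the parent $p_k$ of $k$ in $T$ lies in $\overline v=\{1,\dots,N-2\}$, and call this ``the intended configuration''. It is not: since $k\notin\overline v$ one has $k\geq N-1$, and for the sufficiently subdivided graphs on which $D_N(\Gamma)$ is built the vertex $k$ sits adjacent to a junction, far in the vertex order from the initial segment $\{1,\dots,N-2\}$. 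In the paper's own running example (the $\Theta$-graph with $\gamma_{4,6,9}(\{1,2\})$ for $N=4$) one has $k=4$ with parent $3\notin\{1,2\}$, so $k$ is unblocked in the $0$-cell $\{n,k,\overline v\}$ and $W_0$ selects $k$ rather than $n$; hence $\{e_l^n,k,\overline v\}$ is \emph{redundant}, exactly as the paper states. The generic situation is therefore the one you relegate to your final paragraph.

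And there the argument breaks down: you cannot conclude that ``the end result would still be $\sigma_4^{-1}$'' from the uniqueness of the invariant word. Uniqueness guarantees that the reduction of a fixed word under the actual Morse matching is well defined; it does not transport the answer computed under your (false) collapsibility hypothesis to the true matching, in which $\sigma_1$ and $\sigma_2$ are redundant and must be processed by the simple-homotopy move. That step is the real content of the lemma, and the paper supplies it by citing Lemma 2.3 of \cite{KoPark}: the Morse flow carries the redundant cells $\{e_l^n,k,\overline v\}$ and $\{e_l^m,k,\overline v\}$ to the collapsible cells $\{e_l^n,1,\dots,N-1\}$ and $\{e_l^m,1,\dots,N-1\}$ respectively, so their contribution to the invariant word is empty and only $\{e_l^n,m,\overline v\}^{-1}$ survives. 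To repair your proof you need to carry out (or cite) this computation for the redundant cells rather than appeal to uniqueness.
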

\begin{proof}[Sketch of a proof] By the assumption about the form of $\overline v$, cells $\{e_k^l,m,\overline v\},\ \{e_l^m,n,\overline v\},\ \{e_k^l,n,\overline v\}$ are collapsible. Cells $\{e_l^n,k,\overline v\},\ \{e_l^m,k,\overline v\}$ are redundant. To find the image of the redundant cells under the Morse flow we use lemma 2.3 in \cite{KoPark} which shows that they are carried by the Morse flow to collapsible cells $\{e_l^n,1,2,\hdots,N-1\}$ and $\{e_l^m,1,2,\hdots,N-1\}$ respecively.
\end{proof}

The other type of generators are loops associated to oriented simple cycles in $\Gamma$. Such generators will be called $\mc{O}$-loops. Denote by $\mc{O}=v_1\to v_2\to\hdots\to v_p\to v_1$ an oriented simple cycle in $\Gamma$ that passes through the sequence of vertices $(v_1,v_2,\hdots,v_p,v_1)$ where $v_i$ is adjacent in $\Gamma$ to $v_{i-1}$ and $v_{i+1}$ for $i\in \{2,\hdots,p\}$. For any $\overline v$, a set of $N-1$ vertices of $\Gamma$ such that $\mc{O}\cap \overline v=\emptyset$ we define the corresponding $\mc{O}$-loop as the product
\begin{equation}\label{o-loop}
\alpha_{\mc{O}}(\overline v):=\prod_{e\in E(\Gamma)\cap \mc{O}}\{e,\overline v\}^{a_e},
\end{equation}
where $a_e=1$ if the orientation of $e$ inherited from the order of vertices in the spanning tree agrees with the orientation of cycle $\mc{O}$ and $a_e=-1$ otherwise.
\begin{lemma}\label{lemma:o-loop}
Let $\alpha_{\mc{O}}(\overline v)$ be an $\mc{O}$-loop in $D_N(\Gamma)$ as defined in (\ref{o-loop}). Let $\overline v=\{1,2,\hdots,N-1\}$ if for all deleted edges $e\in\mc{O}\cap(E(\Gamma)-E(T))$ we have $\tau(e)>1$ and let $\overline v=\{2,3,\hdots,N\}$ otherwise. Then, word $\alpha_{\mc{O}}(\overline v)$ is mapped to the Morse complex as 
\begin{equation*}
\alpha_{\mc{O}}(\overline v)\mapsto \prod_{e\in \mc{O}\cap(E(\Gamma)-E(T))}\{e,\overline v_e\}^{a_e},
\end{equation*}
where $\overline v_e=\{1,2,\hdots,N-1\}$ if $\tau(e)>1$ and $\overline v=\{2,3,\hdots,N\}$ otherwise.
\end{lemma}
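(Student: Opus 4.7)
The plan is to mirror the proof of Lemma \ref{lemma:y-loop}: I will classify each factor $\{e, \overline v\}^{a_e}$ of $\alpha_{\mc{O}}(\overline v)$ as critical, collapsible, or redundant, and then reduce the word to its Morse image via the moves of Theorem \ref{thm:moves}. For a deleted edge $e \in \mc{O} \cap (E(\Gamma) - E(T))$, the edge is non-order-respecting by definition. Using conditions T1 and T2 on $T$ together with the choice of $\overline v_e$ as the lowest $N-1$ vertex labels compatible with the disjointness constraint $e \cap v_i = \emptyset$, every vertex of $\overline v_e$ sits on branches of $T$ incident to $\tau(e)$ and is blocked; thus $\{e, \overline v_e\}$ is critical, with orientation $a_e$ matching the definition. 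For a tree edge $e \in \mc{O} \cap E(T)$, the edge is order-respecting, so the $1$-cell $\{e, \overline v\}$ is collapsible when $\iota(e)$ is the lowest unblocked vertex of $\{\iota(e), \overline v\}$, in which case $\{e, \overline v\} = W_0(\{\iota(e), \overline v\})$, and redundant otherwise.

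With this classification in hand, I would apply move 2 of Theorem \ref{thm:moves} to delete the collapsible tree-edge factors directly, and move 3 (simple homotopy) to replace each redundant tree-edge factor by the remaining portion of the boundary word of the critical or collapsible $2$-cell $W_1(\{e, \overline v\})$. By iterating these moves along the cycle, and invoking Lemma 2.3 of \cite{KoPark} to describe how redundant $1$-cells are carried by the Morse flow, the tree-edge portions of $\alpha_{\mc{O}}(\overline v)$ telescope away, leaving only the deleted-edge critical cells.

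The main technical obstacle I anticipate is bookkeeping the shift of the auxiliary vertex labels. The original loop carries a single label $\overline v$, whereas the image carries an edge-dependent label $\overline v_e$ that differs for deleted edges with $\tau(e) = 1$ versus $\tau(e) > 1$. I expect this shift to arise naturally as each arc of consecutive tree edges between two deleted edges of $\mc{O}$ is absorbed by the Morse flow, which slides the auxiliary particles down $T$ to their lowest admissible positions near the next deleted edge encountered along the cycle. Verifying that these label shifts match precisely along each tree-arc of $\mc{O}$, and that the exponents $a_e$ survive the simple-homotopy replacements with the correct signs (reflecting whether the cycle orientation agrees with the tree-induced orientation), is the key computation, directly analogous to the telescoping argument used for the $Y$-loop in Lemma \ref{lemma:y-loop}.
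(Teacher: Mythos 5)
Your proposal follows essentially the same route as the paper's own (two-sentence) sketch: classify the factors of $\alpha_{\mc{O}}(\overline v)$, observe that the tree-edge cells are removed by the Morse flow (collapsing or simple homotopy), and compute the images of the deleted-edge cells via Lemma 2.3 of \cite{KoPark}. One minor slip worth fixing: the deleted-edge cells $\{e,\overline v_e\}$ are critical because each vertex of $\{1,\hdots,N-1\}$ (resp. $\{2,\hdots,N\}$) is blocked by the occupied vertex immediately below it on $T$ (or is the root), not because those vertices lie on branches incident to $\tau(e)$ -- the latter condition is relevant to the \emph{size} $s(\sigma)$, not to criticality.
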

\begin{proof}[Sketch of a proof] 
If $e\in E(T)$, then cell $\{e,\overline v\}$ is collapsible. Otherwise, if $e\in (E(\Gamma)-E(T))$, the image of cell $\{e,\overline v\}$ in the Morse complex can be easily found using lemma 2.3 in \cite{KoPark}.
\end{proof}

The general strategy is to express generators of a minimal Morse presentation of $Br_N(\Gamma)$ as words in $Y$- and $\mc{O}$-loops. There is one technical detail to make sure that all loops are based at the same point given by configuration $\{1,\hdots,N\}$. This can be easily dealt with by conjugating $Y$- and $\mc{O}$-loops with words that connect their initial configurations with the base point. The following lemma allows us to make sure that such a conjugation does not affect the image of $Y$- and $\mc{O}$-loops in the Morse complex. 

\begin{lemma}[\cite{FSbraid}]
The set of collapsible $1$-cells in $D_N(\Gamma)$ is a spanning tree of the $1$-skeleton of $D_N(\Gamma)$. Hence, there exists a path $P_{\overline v}$ from $\{1,\hdots,N\}$ to any configuration $\overline v=\{v_1,\hdots,v_N\}$ that is a word consisting of only collapsible cells.
\end{lemma}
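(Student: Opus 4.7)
The plan is to establish, in order: (i) the set of collapsible $1$-cells has cardinality $|V(D_N(\Gamma))|-1$, and (ii) it is connected inside the $1$-skeleton of $D_N(\Gamma)$. A subgraph of a graph on $|V(D_N(\Gamma))|$ vertices with $|V(D_N(\Gamma))|-1$ edges is a spanning tree if and only if it is connected, so (i) and (ii) together yield the lemma.

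For (i), note that every $0$-cell is either critical or redundant: no $0$-cell can be collapsible since there is no map $W_{-1}$. By hypothesis there is a single critical $0$-cell, namely $\{1,\hdots,N\}$, so the number of redundant $0$-cells equals $|V(D_N(\Gamma))|-1$. Because $W_0$ is an injective partial function whose domain is the set of redundant $0$-cells and whose image is by definition the set of collapsible $1$-cells, the two cardinalities coincide.

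For (ii), I would introduce the potential $\phi(\sigma):=\sum_{v\in\sigma}v$ on $0$-cells. Given a redundant $0$-cell $\sigma$, let $v^{\ast}$ be its lowest unblocked vertex, so that $W_0(\sigma)=(\sigma\setminus\{v^{\ast}\})\cup\{e(v^{\ast})\}$. The two $0$-cells in the boundary of this collapsible $1$-cell are $\sigma$ itself (using $\iota(e(v^{\ast}))=v^{\ast}$) and $\sigma':=(\sigma\setminus\{v^{\ast}\})\cup\{\tau(e(v^{\ast}))\}$. Since $\tau(e(v^{\ast}))<v^{\ast}$ by the orientation convention on $T$, we have $\phi(\sigma')<\phi(\sigma)$. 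Iterating this step produces a sequence of $0$-cells linked by collapsible $1$-cells with strictly decreasing potential; as $\phi$ is integer-valued and bounded below, the sequence must terminate at a $0$-cell that has no redundant predecessor, i.e., at the unique critical $0$-cell $\{1,\hdots,N\}$. Consequently every $0$-cell is connected to $\{1,\hdots,N\}$ through a path of collapsible $1$-cells, which establishes connectivity.

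The main subtlety is verifying that the other boundary $0$-cell of $W_0(\sigma)$ is really $\sigma'$; this requires unwinding the $1$-cell description (\ref{1cell}) together with the convention $\tau(e)<\iota(e)$ and with the fact that $v^{\ast}$ being unblocked forces $\tau(e(v^{\ast}))$ to be distinct from the remaining vertices of $\sigma$, so that $\sigma'$ is a legitimate $0$-cell. Once this is in hand, the potential argument is routine. The final clause of the lemma then follows at once: taking $P_{\overline v}$ to be the unique tree path from $\{1,\hdots,N\}$ to $\overline v$ in the spanning tree just constructed produces a word consisting solely of collapsible $1$-cells.
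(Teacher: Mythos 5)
Your proposal is correct. Note, however, that the paper offers no proof of this lemma at all: it is quoted verbatim from Farley--Sabalka \cite{FSbraid} as an imported result, so there is no in-paper argument to compare against. Your argument is the standard one for this fact and is sound: the bijection between redundant $0$-cells and collapsible $1$-cells given by the (injective) partial function $W_0$, together with the standing assumption of a unique critical $0$-cell, gives the edge count $|V(D_N(\Gamma))|-1$; and the potential $\phi(\sigma)=\sum_{v\in\sigma}v$, which strictly decreases along the non-$\sigma$ endpoint of $W_0(\sigma)$ because $\tau(e(v^\ast))<\iota(e(v^\ast))=v^\ast$, shows that the gradient flow carries every $0$-cell down to $\{1,\hdots,N\}$ through collapsible edges, giving connectivity. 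Two small points worth making explicit if you write this up: first, the injectivity of $W_0$ should be checked rather than asserted (it follows because $\sigma$ is recovered from $W_0(\sigma)=\{e,\overline v\}$ as $\{\iota(e)\}\cup\overline v$); second, the legitimacy of the other boundary cell $\sigma'=(\sigma\setminus\{v^\ast\})\cup\{\tau(e(v^\ast))\}$ follows not from $v^\ast$ being unblocked per se but from the fact that $W_0(\sigma)$ is a cell of $D_N(\Gamma)$, so the edge $e(v^\ast)$, and hence its endpoint $\tau(e(v^\ast))$, is disjoint from the remaining vertices. With those details filled in, your proof is a complete, self-contained replacement for the citation.
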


The next crucial step is to find the actual images of $Y$- and $\mc{O}$-loops in the Morse complex. Although the above lemmas \ref{lemma:o-loop} and \ref{lemma:y-loop} provide some simplification, for arbitrary configurations of free particles $\overline v$ this is usually a complicated task. 

\begin{example}[Physical presentations of $Br_N(\Gamma_\Theta)$]\label{ex:theta-physical}
Let us start with $N=2$ and minimal Morse presentation of $Br_2(\Gamma_\Theta)$ given in equation (\ref{theta-gen-N2}). Consider $Y$-loop 
\begin{gather*}
\gamma_{4,6,9}=\{e_5^9,4\}\{e_5^6,4\}^{-1}\{e_4^5,6\}^{-1}\{e_5^9,6\}^{-1}\times \\ \times \{e_5^6,9\}\{e_4^5,9\}.
\end{gather*}
By lemma \ref{lemma:y-loop} we have $\gamma_{4,6,9}\mapsto \{e_5^9,6\}^{-1}=\tilde\gamma^{-1}$. Next, let us take $\mc{O}$-loops $\alpha_{\mc{D}}(\{2\})$ and $\alpha_{\mc{U}}(\{9\})$ where the corresponding simple cycles read
\begin{gather*}
\mc{D}=5\to6\to7\to8\to1\to11\to10\to9\to5,\\
\mc{U}=1\to2\to3\to4\to5\to6\to7\to8\to1.
\end{gather*}
By lemma  \ref{lemma:o-loop} we have $\alpha_{\mc{D}}(\{2\})\mapsto \{e_1^8,2\}\{e_1^{11},2\}^{-1}=\tilde\alpha_1\tilde\alpha_2^{-1}$ and by a direct calculation we find out that $\alpha_{\mc{U}}(\{9\})\mapsto\{e_5^9,6\}\{e_1^8,2\}=\tilde\gamma\tilde\alpha_1$. Hence, we invert the above expressions as
\begin{gather}\label{theta-physical-N2}
\tilde\gamma=\gamma^{-1}, \\
\nonumber \tilde\alpha_1=\gamma\alpha_{\mc{U}}, \\
\nonumber \tilde\alpha_2=\alpha_{\mc{D}}^{-1}\gamma\alpha_{\mc{U}},
\end{gather}
where we denote the $Y$-loop $\gamma_{4,6,9}$ as $\gamma$ and the $\mc{O}$-loops shortly as $\alpha_{\mc{U}}$ and $\alpha_{\mc{D}}$. Because $Br_2(\Gamma_\Theta)$ is free, we simply have 
\[Br_2(\Gamma_\Theta)=\langle \gamma_{4,6,9},\alpha_{\mc{D}}(\{2\}),\alpha_{\mc{U}}(\{9\})\rangle.\]
To rederive relation (\ref{theta-rel}) note first that we can identify $\gamma\equiv \gamma_L$ from Fig. \ref{theta-braid}. Word associated to loop $\gamma_R$ reads
\begin{gather*}
\gamma_R=\{e_1^{11},2\}\{e_1^8,2\}^{-1}\{e_1^2,8\}\{e_1^{11},8\}^{-1}\times \\ \times\{e_1^8,11\}\{e_1^2,11\}^{-1}.
\end{gather*}
By a direct calculation we check that $\gamma_R\mapsto \tilde\alpha_2\tilde\alpha_1^{-1}\tilde\alpha_2^{-1}\tilde\gamma\tilde\alpha_1$ which after substituting expressions (\ref{theta-physical-N2}) yields relation (\ref{theta-rel}) between physical loops.

Let us next immediately skip to $N=4$. We propose a similar set of generators as $\gamma:=\gamma_{4,6,9}(\{1,2\})$, $\alpha_{\mc{D}}:=\alpha_{\mc{D}}(\{2,3,4\})$ and $\alpha_{\mc{U}}:=\alpha_{\mc{U}}(\{9,10,11\})$. Again, by lemmas \ref{lemma:y-loop} and \ref{lemma:o-loop} we obtain that $\gamma\mapsto\tilde\gamma^{-1}$ and $\alpha_{\mc{D}}\mapsto\tilde\alpha_1\tilde\alpha_2^{-1}$. However, for $\alpha_{\mc{U}}$ we have $\alpha_{\mc{U}}\mapsto \tilde\alpha_1^{-1}\tilde\gamma^{-1}\sigma_2^{-1}\sigma_5^{-1}$. This brings new generators, $\sigma_2$ and $\sigma_5$ into play. We would like to replace them with $Y$-loops $\gamma':=\gamma_{4,6,9}(\{1,10\})$ and $\gamma'':=\gamma_{4,6,9}(\{10,11\})$. By a direct computation we check that indeed $\gamma'\mapsto\sigma_2^{-1}$ and $\gamma''\mapsto\sigma_5^{-1}$. At this point we have enough loops to invert the above relations. The result reads
\begin{gather}\label{theta-physical-gen}
\gamma=\tilde\gamma^{-1}, \sigma_2=(\gamma')^{-1}, \sigma_5=(\gamma'')^{-1}\\
\nonumber \tilde\alpha_1=\gamma\gamma'\gamma''\alpha_{\mc{U}}^{-1},\  \tilde\alpha_2=\alpha_{\mc{D}}^{-1}\gamma\gamma'\gamma''\alpha_{\mc{U}}^{-1}.
\end{gather}
As a final step, we rephrase the relator of minimal Morse presentation (\ref{minimal-thetaN4}) in terms of new generators. Moreover, we have to add two new relators that express the dependency of $\gamma'$ and $\gamma''$ on other generators. After a straightforward substitution, the relator from presentation (\ref{minimal-thetaN4}) now reads
\[R_1=\left[\gamma^{-1},Ad_{\gamma\gamma'\gamma''\alpha_{\mc{U}}^{-1}\alpha_{\mc{D}}^{-1}\gamma\gamma'\gamma''\alpha_{\mc{U}}^{-1}}\left(\gamma^{-1}\right)\right].\]
The additional relators are obtained from boundary words (\ref{theta-rel-N4}). In particular, we  have $1=\tilde\alpha_2\tilde\gamma^{-1} \tilde\alpha_2^{-1}\sigma_2$ and $1=\tilde\alpha_2\sigma_2^{-1}\tilde\alpha_2^{-1}\sigma_5$. After substituting Morse generators with expressions (\ref{theta-physical-gen}), we get
\begin{gather}\label{relators-theta-physical}
R_2=Ad_{\alpha_{\mc{D}}^{-1}\gamma\gamma'\gamma''\alpha_{\mc{U}}^{-1}}\left(\gamma\right)\left(\gamma'\right)^{-1}, \\
\nonumber R_3=Ad_{\alpha_{\mc{D}}^{-1}\gamma\gamma'\gamma''\alpha_{\mc{U}}^{-1}}\left(\gamma'\right)\left(\gamma''\right)^{-1}.
\end{gather}
Summing up, we have replaced minimal Morse presentation (\ref{minimal-thetaN4}) with $3$ generators and $1$ relator with a physical presentation with $5$ generating loops
\begin{gather}\label{theta-phys-N4}
Br_4(\Gamma_\Theta)=\langle \gamma,\gamma',\gamma'',\alpha_{\mc{U}},\alpha_{\mc{D}}|\ R_1=1,\\ 
\nonumber R_2=1,\ R_3=1\rangle.
\end{gather}
\end{example}
The above example presents the full complexity of the problem of constructing physical presentations of graph braid groups. A systematic way of constructing such presentations can be summarised in the following points.
\begin{enumerate}
\item Find a minimal Morse presentation of $Br_N(\Gamma)$.
\item Find $Y$- and $\mc{O}$-loops whose images in the Morse complex contain generators of the minimal Morse presentation.
\item If the images of $Y$- and $\mc{O}$-loops from the previous point contain critical cells other than the minimal generators, add $Y$- and $\mc{O}$-loops that map to the new critical cells. Repeat the procedure until a closed system of equations is obtained.
\item Invert the equations to express critical cells as words in $Y$- and $\mc{O}$-loops.
\item Substitute the minimal generators with their corresponding words in $Y$- and $\mc{O}$-loops to rewrite relators of the minimal Morse presentation in terms of words in loops.
\item From boundary words of critical $2$-cells construct new relators that express the dependency of critical cells on the minimal generators. Rewrite the relators in terms of $Y$- and $\mc{O}$-loops.
\end{enumerate}

\section{Stabilisation of $\mathcal{M}_N(\Gamma,U(k))$}\label{sec:stabilisation}
Let us revisit equation (\ref{presentation}). In order to construct a $U(k)$ representation of group $Br_N(\Gamma)$, to each generator we assign a unitary matrix $\alpha_i\mapsto U_i,\ i=1,\hdots,r$. Relators $\{R_i\}_{i=1}^s$ impose polynomial equations for the chosen set of matrices. Hence, we immediately see that $\mathcal{M}_N(\Gamma,U(k))$ is an algebraic variety, i.e. is defined as the zero set of a system of polynomial equations. More precisely, we have 
\begin{equation}
\mathcal{M}_N(\Gamma,U(k))=\mu_N^{-1}(\bone,\hdots,\bone)/U(k),
\end{equation}
where map $\mu_N:\ U(k)^{r}\to U(k)^s$ acts as
\begin{equation*}
\mu_N(U_1,\hdots,U_r)=\left(R_i(U_1,\hdots,U_r)\right)_{i=1}^s.
\end{equation*}
The essential part in establishing stabilisation of $\mathcal{M}_N(\Gamma,U(k))$ is to define a map which allows us to rewrite generators and relators of a minimal presentation of $Br_N(\Gamma)$ as generators and relators of $Br_{N+1}(\Gamma)$. 
\begin{definition}\label{definition:plus}
Assume that $\Gamma$ is sufficiently subdivided for some, possibly large, $N$. For a critical $1$-cell $\sigma\in D_{N'}(\Gamma)$ for $N'<N$ define $\sigma_+$ as $\sigma\cup\{v\}$ for $v$ such that $v$ is the minimal vertex among $\{1,\hdots,N\}$ for which $\sigma\cup\{v\}$ is a critical $1$-cell in $D_{N'+1}(\Gamma)$. Similarly, for $\tau$ a critical $2$-cell define $\tau_+$ as $\tau\cup\{v\}$ for $v$ such that $v$ is the minimal vertex among $\{1,\hdots,N\}$ for which $\sigma\cup\{v\}$ is a critical $1$-cell in $D_{N'+1}(\Gamma)$. We extend map $+$ to words by acting on consequent cells.
\end{definition}

\begin{lemma}
If $\tilde\sigma_1^{a_1}\hdots\tilde\sigma_k^{a_k}$ is the boundary word for a critical $2$-cell $\tau$, i.e. $\widetilde{b(\tau)}=\tilde\sigma_1^{a_1}\hdots\tilde\sigma_k^{a_k}$, then $\widetilde{b(\tau_+)}=(\tilde\sigma_1)_+^{a_1}\hdots(\tilde\sigma_k)_+^{a_k}$.
\end{lemma}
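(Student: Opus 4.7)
The explicit formula (\ref{bnd-word}) for a boundary word is a product of four $1$-cells, each obtained from $\tau$ by replacing one of its edges by one of its endpoints. Writing $\tau_+ = \tau\cup\{v\}$, the same formula applied to $\tau_+$ produces the word $b(\tau)$ with $v$ adjoined to each of its four $1$-cells, so $b(\tau_+) = b(\tau)\cup\{v\}$ in an obvious shorthand. The content of the lemma thus reduces to showing that the Morse reduction $\sim$ commutes with adjoining $v$, and that each resulting critical $1$-cell is paired with precisely its $+$-vertex.

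The plan is an induction on the length of a reduction. The inductive claim is: if $w$ is a word in $1$-cells of $D_{N'}(\Gamma)$ in which $v$ is blocked and disjoint from the edges in every factor, then any sequence of moves from Theorem~\ref{thm:moves} applied to $w\cup\{v\}$ in $D_{N'+1}(\Gamma)$ can be obtained by applying the same sequence to $w$ and then adjoining $v$. This is checked move by move. Free cancellation is trivial. A collapse transfers because the lowest unblocked vertex of $\sigma\cup\{v\}$ coincides with that of $\sigma$ when $v$ is blocked, and likewise a simple homotopy transfers because $W_1(\sigma\cup\{v\}) = W_1(\sigma)\cup\{v\}$ for the same reason, combined with the additivity of the boundary word formula (\ref{bnd-word}) in spectator vertices. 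The hypothesis that $v$ remains blocked propagates at each step, because blocking of a spectator vertex depends only on the local tree structure near that vertex, which is untouched by the moves. At the initial step $v$ is blocked in each of the four $1$-cell faces of $\tau_+$ by criticality of $\tau_+$, so the induction runs.

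This yields $\widetilde{b(\tau_+)} = (\tilde\sigma_1\cup\{v\})^{a_1}\cdots(\tilde\sigma_k\cup\{v\})^{a_k}$, and it remains to identify each $\tilde\sigma_i\cup\{v\}$ with $(\tilde\sigma_i)_+$. Since each $\tilde\sigma_i\cup\{v\}$ is a critical $1$-cell of $D_{N'+1}(\Gamma)$, the vertex $v$ is a valid $+$-candidate for $\tilde\sigma_i$, so $(\tilde\sigma_i)_+ = \tilde\sigma_i\cup\{v_i\}$ with $v_i\leq v$. The reverse inequality is the delicate point: a strictly smaller candidate $v_i < v$ producing a critical lift of $\tilde\sigma_i$ would also produce a critical lift $\tau\cup\{v_i\}$ of $\tau$, contradicting minimality of $v$ in the definition of $\tau_+$. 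The main obstacle is justifying this backward propagation; I expect it to reduce to the local observation that a spectator vertex is blocked in a cell precisely when the tree branch it sits on is saturated above it, a condition that is preserved when passing from $\tau$ to $\tilde\sigma_i$ through the Morse flow and hence also passes backward. Conditions T1 and T2 on the spanning tree $T$ from Subsection~\ref{sec:minimal} should suffice to make this argument rigorous.
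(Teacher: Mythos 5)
Your argument follows the paper's own proof: the boundary word of $\tau_+$ is $b(\tau)$ with the spectator vertex $v$ adjoined to each of the four faces, and the Morse reduction commutes with adjoining a blocked spectator vertex, so $\widetilde{b(\tau_+)}$ is obtained by adjoining $v$ to each critical cell of $\widetilde{b(\tau)}$. Your move-by-move induction, and your discussion of why the adjoined vertex is in fact the minimal one defining $(\tilde\sigma_i)_+$, simply supply detail that the paper's two-sentence sketch asserts without proof.
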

\begin{proof}
Boundary word for $b(\tau_+)$ for $\tau_+=\tau\cup\{v\}$ is obtained from $b(\tau)$ simply by adding vertex $v$ to each cell in $b(\tau)$. Furthermore, if $\sigma\mapsto \tilde\sigma$ then if $v$ is such that $\tilde\sigma_+=\tilde\sigma\cup\{v\}$ then we have $\sigma\cup\{v\}\mapsto \tilde\sigma_+$ under the Morse flow.
\end{proof}

The above lemma directly implies that for $2$-connected graphs generators of the minimal Morse presentation of $Br_{N'}(\Gamma)$, $N'<N$, for a choice of spanning tree $T\subset \Gamma$ are in a one-to-one correspondence with generators of the minimal Morse presentation of $Br_{N'+1}(\Gamma)$ via map $+$ as defined in \ref{definition:plus}. Furthermore, if $R$ is a relator for the above minimal Morse presentation of $Br_{N'}(\Gamma)$, then $R_+$ is a relator for $Br_{N'+1}(\Gamma)$. This means that $\mathcal{M}_{N'+1}(\Gamma,U(k))\subset \mathcal{M}_{N'}(\Gamma,U(k))$ as an algebraic subvariety. In other words, $\mathcal{M}_{N'+1}(\Gamma,U(k))$ satisfies all polynomial equations that define $\mathcal{M}_{N'}(\Gamma,U(k))$ and some additional polynomial equations coming from new relators. Because the number of complex variables is fixed by the number of generators of the minimal Morse presentation and by number $k$, the procedure of adding new equations has to stabilise at some point.

\begin{example}[Space $\mathcal{M}_{4}(\Gamma_\Theta,U(k))$]
Assign $(\tilde\gamma,\tilde\alpha_1,\tilde\alpha_2)\mapsto (U_{\tilde\gamma},U_1,U_2)\subset U(k)^3$. On the level of matrices, relator from presentation (\ref{minimal-thetaN4}) can be rewritten as $[U_{\tilde\gamma},Ad_{U_1U_2}(U_{\tilde\gamma})]=0$, where by square brackets we mean here the algebraic commutator $[A,B]=AB-BA$. Using the conjugation freedom, one can diagonalise both $U_{\tilde\gamma}$ and $Ad_{U_1U_2}(U_{\tilde\gamma})$ at the same time. Note that matrices $U_{\tilde\gamma}$ and $Ad_{U_1U_2}(U_{\tilde\gamma})$ are isospectral. Hence after the aforementioned diagonalisation, conjugation $Ad_{U_1U_2}(U_{\tilde\gamma})$ can only permute eigenvalues of $U_{\tilde\gamma}$. If the spectrum of $U_{\tilde\gamma}$ is non-degenerate, this means that $U_1U_2=e^{i\alpha}P$ where $P$ is a permutation matrix. In other words, $\mathcal{M}_{4}(\Gamma_\Theta,U(k))$ contains $k!$ isotypical connected components $\mathcal{M}_P$ labelled by elements of the symmetric group, $P\in S_k$. Each component is of the form 
\[\mathcal{M}_P\cong U(k)\times U(1)\times \frac{(U(1)^{k}-\Delta)}{S_k}.\] 
Factor $(U(1)^{k}-\Delta)/S_k$ where $\Delta:=\{(z_1,\hdots,z_k)\in U(1)^{k}:\ z_i=z_j\ \mathrm{for\ some}\ i\neq j\}$ corresponds to the quotient of the set of diagonal $U(k)$ matrices with non-degenerate spectra by the action of the Weyl group which permutes the eigenvalues.  A tuple $(U,e^{i\alpha},[(e^{i\phi_1},\hdots,e^{i\phi_k})])\in \mathcal{M}_P$ determines $U_1=U$, $U_2=U^\dagger e^{i\alpha} P$ and $U_{\tilde\gamma}=\mathrm{diag}(e^{i\phi_1},\hdots,e^{i\phi_k})$. 

If matrix $U_{\tilde\gamma}$ has a $d$-fold degeneracy in its spectrum, matrix $U_1U_2$ must be of the form $e^{i\alpha}PB$ where $B$ is a block-diagonal matrix with a $d\times d$ block forming a $U(d)$ matrix and ones outside the $d\times d$ block. Matrix $P$ is a permutation matrix from the quotient $S_k/S_d$.
Thus, we have components
\[\mathcal{M}_P^{(d)}\cong U(k)\times U(d)\times U(1)^{2}\times \frac{(U(1)^{k-d}-\Delta)}{S_{k-d}}.\] 
A tuple $(U,B,e^{i\alpha},e^{i\phi},[(e^{i\phi_1},\hdots,e^{i\phi_{k-d}})])\in \mathcal{M}_P^{(d)}$ determines $U_1=U$, $U_2=U^\dagger e^{i\alpha} PB$ and $U_{\tilde\gamma}=\mathrm{diag}(e^{i\phi},\hdots,e^{i\phi},e^{i\phi_1},\hdots,e^{i\phi_{k-d}})$. In the extreme case where $U_{\tilde\gamma}=e^{i\phi}\bone$, $U_1U_2$ can be any $U(k)$ matrix. Hence, in this case we have only one component
\[\mathcal{M}_0\cong U(d)\times U(d)\times U(1),\] 
where a tuple $(U,U',e^{i\phi})$, determines $U_1=U,\ U_2=U'$ and $U_{\tilde\gamma}=\mathrm{diag}(e^{i\phi},\hdots,e^{i\phi})$.

Summing up, we have obtained the following decomposition into connected components
\begin{gather}\label{theta-moduli}
\mathcal{M}_{4}(\Gamma_\Theta,U(k))=\mathcal{M}_0\sqcup\bigsqcup_{P\in S_k}\mathcal{M}_P\,\sqcup \\ \nonumber \sqcup\,\bigcup_{d=2}^{k-1}\bigsqcup_{P\in S_k/S_d}\mathcal{M}_P^{(d)}.
\end{gather}
The above decomposition simplifies slightly when specified to $k=2$. Then, the spectrum of $U_{\tilde\gamma}$ is either non-degenerate or $U_{\tilde\gamma}=e^{i\phi}\bone$. Component $\mathcal{M}_0=U(2)^{2}\times U(1)$. There are two "non-degenerate" components $\mathcal{M}_P$ that correspond to the identity and the transposition element of $S_2$. Both $\mathcal{M}_P$ are of the form $U(2)\times U(1)\times C_2(U_1)$ where $C_2(U_1)$ is a two-point unordered configuration space of $U(1)$ which is a topological circle. It is known that $C_2(S^1)$ is topologically $S^1$.
\end{example}

\section{Locally abelian anyons}
Following the concept of {\it generalised fractional statistics} on a torus which was introduced in \cite{Einarsson}, we show how to define analogous statistics on graphs using physical presentations of graph braid groups from section \ref{sec:exchanges}. The idea is to construct $U(k)$-representations of $Br_N(\Gamma)$ where to generating $Y$-loops we assign matrices of the form $e^{i\phi}\bone$ and only to generating $\mc{O}$-loops we assign general unitary matrices. The interpretation is that $Y$-loops correspond to exchanges of pairs of particles which are local in the sense that they are localised on junctions of $\Gamma$. Hence, $Y$-loops only utilise the local structure of $\Gamma$ as a star graph. On the other hand, $\mc{O}$-loops are global entities in the sense that they take a particle around a simple cycle in $\Gamma$ which can cross many junctions and hence they utilise the global structure of $\Gamma$. We say that anyons arising as such representations of graph braid groups are {\it locally abelian} anyons. This is because matrices from local $Y$-loops commute with each other and result with the multiplication of the multi-component wave function by an abelian phase factor.

 It has been shown in \cite{Einarsson} that quasiholes in certain Laughlin wave functions with periodic boundary conditions can be subject to generalised fractional statistics. Finding a physical model for locally abelian anyons on graphs is an open problem.
 
 Let us next show how locally abelian anyons are realised on a $\Theta$-graph.
 \begin{example}[Locally abelian anyons on a $\Theta$-graph]
 Let us examine the physical presentation of $Br_4(\Gamma_\Theta)$ that we derived in Example \ref{ex:theta-physical}. For the $Y$-loops we assign $\gamma\mapsto U_\gamma=e^{i\phi}\bone$, $\gamma'\mapsto U_{\gamma'}=e^{i\phi'}\bone$,\ $\gamma''\mapsto U_{\gamma''}=e^{i\phi''}\bone$. To the $\mc{O}$-loops we assign general $U(k)$ matrices $\alpha_{\mc{U}}\mapsto U_{\mc{U}}$ and $\alpha_{\mc{D}}\mapsto U_{\mc{D}}$. Relations between $\phi,\ \phi'$ and $\phi''$ can be derived from relators $R_1,\ R_2$ and $R_3$ in (\ref{theta-phys-N4}). In particular, because unitary matrices assigned to $Y$-loops are proportional to identity, they are invariant under conjugation. Hence, $R_1=\bone$ is satisfied automatically while $R_2=\bone$ and $R_3=\bone$ yield 
 \[\phi=\phi'=\phi''\ \mathrm{mod}\ 2\pi.\]
 Hence, locally abelian anyons from $Br_4(\Gamma_\Theta)$ are determined by an arbitrary choice of local exchange phase $\phi\in[0,2\pi[$ and global gauge $U(k)$ operators $U_{\mc{D}}$, $U_{\mc{U}}$. This exactly corresponds to component $\mathcal{M}_0$ of $\mathcal{M}_{4}(\Gamma_\Theta,U(k))$ in (\ref{theta-moduli}).
 \end{example}
 
\section*{Acknowledgements}
 TM gratefully acknowledges the financial support of the National Science Centre of Poland -- grants {\it Etiuda} no. $2017/24/T/ST1/00489$ and {\it Preludium} no. $2016/23/N/ST1/03209$.

\end{document}